\newcommand{\ket}[1]{\mbox{$ | #1 \rangle $}}
\newcommand{\ketbra}[1]{\mbox{$ | #1 \rangle\langle #1 | $}}
\newcommand{\bra}[1]{\mbox{$ \langle #1 | $}}
\newcommand{\tr}{\mathrm{tr}}
\newcommand{\cP}{{\cal P}}
\newcommand{\E}{\mathrm{e}}
\definecolor{light-gray}{gray}{0.95}
\newtheoremstyle{note}      
  {\topsep/2}              	
  {\topsep/2}            	
  {}                        
  {\parindent}             	
  {\itshape}                
  {.---}                    
  {0pt}                     
  {\thmname{#1}\thmnumber{ \itshape#2}\thmnote{ (#3)}} 
\newtheorem{theorem}{Theorem}
\newtheorem{corollary}{Corollary}
\newtheorem{proposition}{Proposition}
\theoremstyle{definition}
\theoremstyle{remark}
\newtheoremstyle{remboldstyle}
  {}{}{\itshape}{}{\bfseries}{.}{.5em}{{\thmname{#1 }}{\thmnumber{#2}}{\thmnote{ (#3)}}}
\theoremstyle{remboldstyle}
\begin{document}
\title{Beating the Optimal Verification of Entangled States via Collective Strategies}

\author{Ye-Chao Liu}
\email{liu@zib.de}
\affiliation{Zuse-Institut Berlin, Takustra{\ss}e 7, 14195 Berlin, Germany}

\author{Jiangwei Shang}
\email{jiangwei.shang@bit.edu.cn}
\affiliation{Key Laboratory of Advanced Optoelectronic Quantum Architecture and Measurement (MOE), School of Physics, Beijing Institute of Technology, Beijing 100081, China}

\date{December 7, 2025}
%

\begin{abstract}
In the realm of quantum information processing, the efficient characterization of entangled states poses an overwhelming challenge, rendering the traditional methods including quantum tomography unfeasible and impractical.
To tackle this problem, we propose a new verification scheme using collective strategies, showcasing arbitrarily high efficiency that beats the optimal verification with global measurements. 
Our collective scheme can be implemented in various experimental platforms and scalable for large systems with a linear scaling on hardware requirement, and distributed operations are allowed. 
Notably, larger ensembles can always improve the efficiency further, but without increasing the quantum memory.
More importantly, the approach consumes only a few copies of the entangled states, while ensuring the preservation of unmeasured ones, and even boosting their fidelity for any subsequent tasks. 
Furthermore, our protocol provides additional insight into the specific types of noise affecting the system, thereby facilitating potential targeted improvements. 
These advancements hold promise for a wide range of applications, offering a pathway towards more robust and efficient quantum information processing.
\end{abstract}

\maketitle
%

\textit{Introduction.---}%
The rapid advancement of quantum technologies has highlighted the fundamental challenge of efficiently characterizing quantum systems.
In particular, entangled states are crucial for tasks including quantum teleportation \cite{Bennett.etal1993}, quantum key distribution \cite{Ekert1991, Bennett_QKD_2014}, and distributed or blind quantum computation \cite{Cirac.etal1999, Hayashi.Morimae2015, Gheorghiu_verification_2018}.
The standard method of quantum state tomography \cite{QSE2004} requires an exponential number of samples with respect to the system size, making it impractical for large systems. 
Although full tomography provides comprehensive knowledge about a quantum state, many practical tasks only require specific information, prompting the need to develop more efficient certification methods \cite{Flammia.Liu2011, Dimic.Dakic2018, Pallister.etal2018}.

The methodology of quantum state verification (QSV) stands out, in particular, owing to its high efficiency and low resource consumption \cite{Pallister.etal2018}. 
In short, QSV is a procedure to determine that the output of a quantum device is a specific target state, say $\ket{\psi}$. 
In general, the verification efficiency depends only on the second-largest eigenvalue $\lambda$ of the QSV scheme $\Omega$, such that 
\begin{equation}
    N \approx \frac1{1-\lambda}\epsilon^{-1}\ln\delta^{-1}\,,
\end{equation}
where $\epsilon$ is the infidelity and ${1-\delta}$ denotes the confidence level.
If one can perform global projective measurement ${\ket{\psi}\bra{\psi}}$, the target state can be easily verified with the sample complexity ${N_{\rm opt} \approx\epsilon^{-1}\ln\delta^{-1}}$, known as the optimal global verification efficiency. 
However, valuable target states are usually entangled, and such projections are experimentally challenging, especially for multipartite ones.
Nevertheless, a series of studies have, theoretically and experimentally, demonstrated that various entangled states can be verified by local measurements with polynomial, linear, or even constant scaling of the resource consumption relative to the system size \cite{Morimae.etal2017, Takeuchi.Morimae2018, Yu.etal2019, Li.etal2019, Wang.Hayashi2019, Zhu.Hayashi2019a, Zhu.Hayashi2019b, Zhu.Hayashi2019c, Zhu.Hayashi2019d, Liu.etal2019b, Li.etal2020b, Dangniam.etal2020, Zhang.etal2020a, Jiang.etal2020, Zhang.etal2020b, Li.etal2020a, Liu.etal2020b,Liu.etal2021, Han.etal2021, Zhu.etal2022, ChenHuang.etal2023, Li.Y.etal2021, Yu.etal2022}. 
Compared to tomography, QSV significantly reduces the resource consumption, and in many cases is comparable to global verification.  

Advanced measurement techniques can further enhance the verification efficiency. 
For instance, optimal verification protocols with adaptive measurements have been designed for arbitrary bipartite entangled states \cite{Yu.etal2019, Li.etal2019, Wang.Hayashi2019}. 
Universal optimal verification for arbitrary entangled states can be realized by employing quantum nondemolition measurements \cite{Liu.etal2020b}. 
Recently, collective measurements in QSV have shown the potential to surpass the optimal global verification for Bell states and Greenberger-Horne-Zeilinger (GHZ) states \cite{miguel-ramiro_collective_2022}. 
For general graph states, collective strategies can asymptotically reach the efficiency of global verification for the high-accuracy scenario as ${\epsilon\!\to\!0}$ \cite{chen_memoryQSV_2025}.
However, both these collective schemes are restricted to small scales with current technology, which limits their advantages.

Meanwhile, the price paid for QSV’s low resource consumption lies in its inability to learn noise, which is prevalent in various fast characterization methods. 
In practical scenarios, if the prepared quantum states fail the test, full tomography is often required to improve the state preparation, leading to high sample consumption again. 
This challenge has led to the development of improved tomographic methods including compressed sensing \cite{Gross.etal2010, Gross2011} and corrupted sensing \cite{ma_corruptedsensing_2025}, which, however, still scales exponentially though the resource consumption is reduced. 
Another approach to address imperfection involves understanding the noise channels, such as the Pauli channel, which typically requires exponentially increasing number of measurements to characterize. 
This learning efficiency can be improved to polynomial scaling with local correlations \cite{flammia_efficient_2020, harper_efficient_2020}, or to linear scaling under the assumption of sparse Pauli channels \cite{harper_fast_2021}.

In this work, we fully utilize the power of collective strategies by proposing a scheme of verifying arbitrary entangled states with arbitrarily high efficiency that beats the optimal global verification. 
The construction of the collective scheme is easy to realize experimentally and scalable with a linear increasing on hardware requirement.
Importantly, we can always scale the scheme with larger ensembles to improve the efficiency further, but without increasing the quantum memory, which is desirable for practical implementations. 
In the meantime, our approach can synchronously preserve and enhance the unmeasured states, thereby boosting their fidelity for any subsequent tasks, which is particularly desirable for online tasks.
Furthermore, the collective scheme can provide additional insight into the specific types of noise affecting the system, thereby facilitating targeted improvements.

\begin{figure}[t]
  \includegraphics[width=.95\columnwidth]{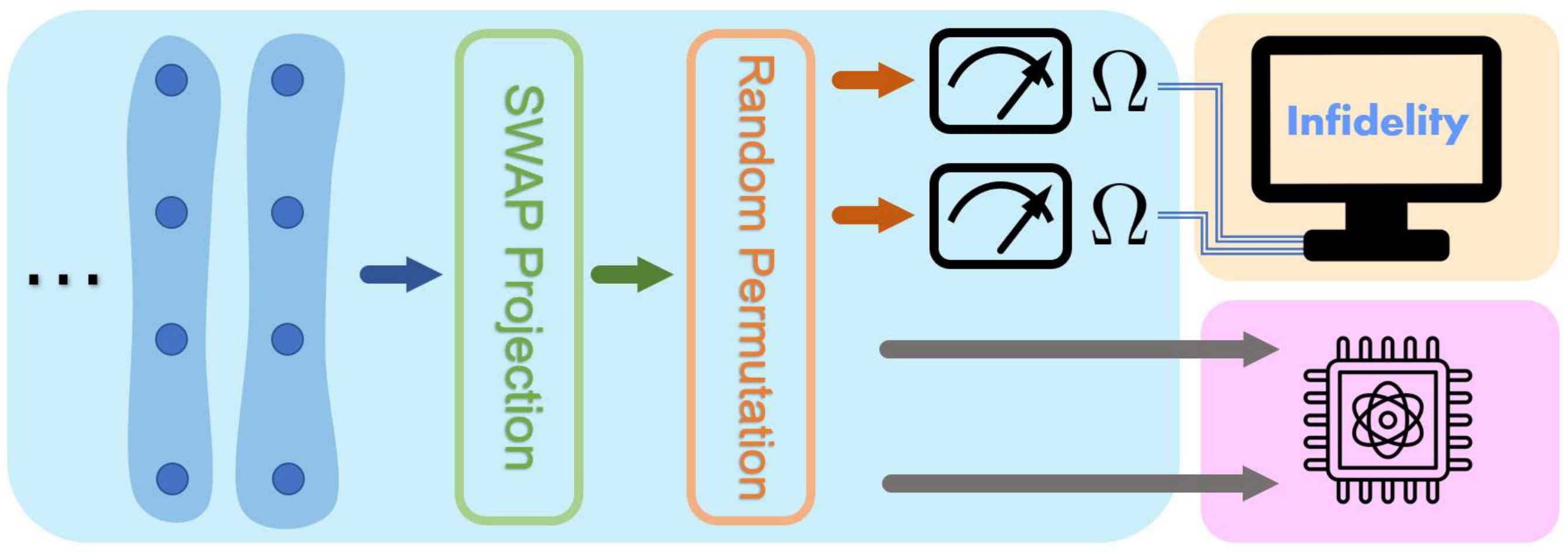}
  \caption{Sketch of the verification scheme using collective strategies. We assume that a source prepares multiple copies of the target state. Then a SWAP projection followed by a random permutation is applied on all the copies. Next, standard QSV is conducted on a random subset of the copies, while the remaining are retained for any subsequent tasks. If all the measured copies pass the test, the collective verification scheme is called successful. By repeating the procedure several rounds, the target state can be verified within a certain infidelity and confidence level.}
  \label{fig:strategy}
\end{figure}

\textit{The collective QSV scheme.---}%
An overview of the collective QSV scheme is depicted in Fig.~\ref{fig:strategy}. 
Our objective is to verify a device that claims to produce a target state $\ket{\psi}$ by collectively using $k$ copies of the states generated by the device. The procedure consists of the following three steps: 
(1) SWAP projection on all the $k$ copies; 
(2) random permutation; 
(3) standard QSV on a subset of $t(\leq k)$ copies.
Note that the last step is based on the tradeoff between time consumption and sample consumption of the verification procedure.

First, the SWAP projection on two quantum states is defined as ${D=(\openone+S)/2}$, where $S$ denotes the SWAP operation between the two states.
This procedure, known as the SWAP test, measures the similarity between two quantum states in terms of purity \cite{barenco_stabilisation_1996, buhrman_quantum_2001}. 
Thus, the SWAP projection is also useful for quantum state purification \cite{ricci_experimental_2004, Childs_2025_streaming}, which in turn can be beneficial for quantum cooling \cite{cotler_quantum_2019} and error mitigation \cite{Huggins_2021_virtual, Koczor_2021_exponential, czarnik_2021_qubit, Quek_2024_multivariate}.
For $k$ quantum states, the SWAP projection is extended to
\begin{eqnarray}
D_k=\frac1{2}\bigl(\openone+S_k\bigr)\,,
\end{eqnarray}
where $S_k$ represents the cyclic permutation on all the $k$ states by shifting the $m$th state to ${m\!-\!1\!\!\pmod{k}}$. 
Note that such a SWAP projection is a positive operator-valued measure (POVM) element.
One way to implement the SWAP projection on a state $\rho$ is to prepare and measure the ancilla of a controlled SWAP operation in the basis ${\ket{+}=(\ket{0}+\ket{1})/2}$ \cite{Ekert_estimation_2002}, i.e.,
\begin{eqnarray}
D_k(\rho) = \tr_{\rm ancilla}\left[\ket{+}\bra{+}\otimes\openone \cdot cS_k \bigl(\ket{+}\bra{+}\otimes \rho\bigr) cS_k^\dag\right]\!,
\end{eqnarray}
where $cS_k$ is the controlled permutation operation on the $k$ states collectively. While performing this operation on multiple multipartite states may seem challenging, it is feasible with current technologies \cite{Huggins_2021_virtual, Koczor_2021_exponential, czarnik_2021_qubit, Quek_2024_multivariate}; see the following proposition.
\begin{proposition}\label{Prop:SWAP}
The SWAP projection on $k$ (different or the same) $n$-qubit states can be realized using $nk$ controlled qubit-qubit SWAP operations (Fredkin gates), or $5nk$ two-qubit gates with one ancilla qubit.
\end{proposition}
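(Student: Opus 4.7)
The plan is to realise the SWAP projection $D_k = (\openone + S_k)/2$ through the ancilla-based identity already written down in the paper: attach a qubit in $\ket{+}$, apply the controlled permutation $cS_k$, and measure the ancilla in the $X$-basis, so that the outcome $\ket{+}$ projects onto $D_k$. The problem therefore reduces to counting the elementary gates required to implement $cS_k$ on $k$ $n$-qubit registers that share a single control qubit.

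First I would decompose $S_k$ into state-level transpositions. The $k$-cycle that sends register $m$ to register $m+1\pmod k$ can be written as a product of $k-1$ adjacent swaps, e.g.\ $S_k = T_{1,2}T_{2,3}\cdots T_{k-1,k}$. Each $T_{i,i+1}$ swaps two $n$-qubit registers and so is the tensor product of $n$ qubit-qubit SWAPs, one at each matching qubit position. Hence $S_k$ is a product of $n(k-1)\le nk$ qubit-qubit SWAPs. Because a control qubit distributes through a product of operators that leave it untouched, the controlled version $cS_k$ is obtained by promoting every qubit-qubit SWAP to a Fredkin gate whose control is the shared $\ket{+}$ ancilla. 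This establishes the first claim, with at most $nk$ Fredkin gates.

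For the two-qubit-gate count I would invoke the known decomposition of a single Fredkin gate into five two-qubit gates requiring no further ancilla. Applied to each of the Fredkins in the circuit, it yields at most $5nk$ two-qubit gates in total, while the only ancilla anywhere in the construction remains the single $\ket{+}$ qubit introduced at the start.

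The only genuinely delicate point is the legitimacy of pushing the single control through the whole decomposition, i.e.\ verifying that $c(AB) = (cA)(cB)$ whenever the control qubit is never touched by $A$ or $B$, and then checking that the chosen ordering of adjacent transpositions really multiplies out to $S_k$. Both are routine, so the proposition reduces to gate counting once the two ingredients, cycle-to-transpositions and Fredkin-to-two-qubit-gates, are in place.
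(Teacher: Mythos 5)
Your proposal is correct and follows essentially the same route as the paper: decompose the controlled cyclic permutation into $k-1$ adjacent controlled register-swaps sharing the single ancilla, split each into $n$ Fredkin gates, and apply the five-two-qubit-gate Fredkin decomposition, yielding $n(k-1)\leq nk$ Fredkin gates and $5nk$ two-qubit gates. The only cosmetic difference is that you make the cycle-to-transposition step and the distributivity of the shared control explicit, which the paper leaves implicit.
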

In linear optics systems, the SWAP projection can be efficiently implemented using controlled qubit-qubit SWAPs, also known as the Fredkin gate \cite{patel_quantum_2016, ono_implementation_2017, starek_nondestructive_2018}. 
The Fredkin gate can be decomposed into five two-qubit gates \cite{yu_optimal_2015}, facilitating its efficient extension to a larger scale in general quantum circuit constructions. 
In cold-atom setups, a controlled SWAP on two multipartite states can be directly realized and extended to a permutation on $k$ states using Rydberg interactions \cite{pichler_measurement_2016}. 
See the sketch in Fig.~\ref{fig:SWAP}, and a detailed proof can be found in Appendix A \cite{supp}.

\begin{figure}[t]
  \includegraphics[width=.95\columnwidth]{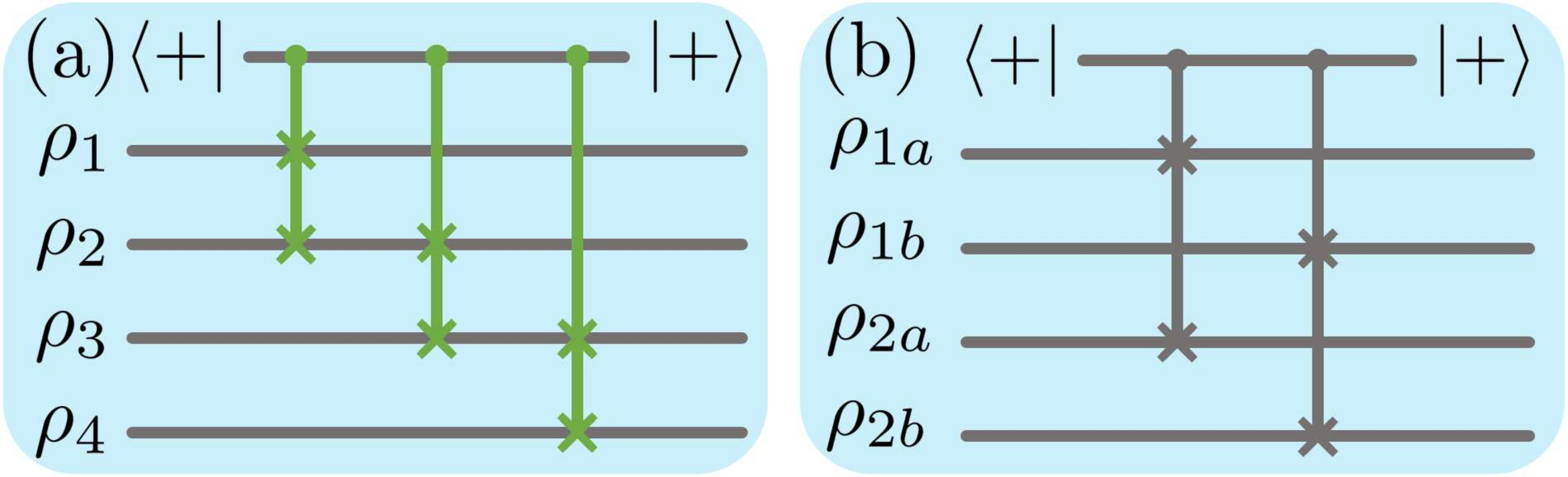}
  \caption{(a) Construction of a SWAP projection by performing controlled SWAP operations $cS_k$ on four input states. (b) Construction of a controlled SWAP operation by performing controlled qubit-qubit SWAPs, i.e., Fredkin gates on two bipartite states. }
  \label{fig:SWAP}
\end{figure}

If all the $k$ copies are indeed the target state $\ket{\psi}$, the SWAP projection will not disturb them
\begin{eqnarray}
    D_k\bigl(\ketbra{\psi}^{\otimes k}\bigr)
    = \ketbra{\psi}^{\otimes k}\,,
\end{eqnarray}
then ideally, selection of the measured copies should not influence the measurement outcomes.
Therefore, a random permutation $\cP$ can be applied before the measurements in order to mitigate potential asymmetrical noise.

Finally, using standard QSV measurement $\Omega$ on a subset of $t(\leq k)$ copies, the collective QSV scheme can be expressed as
\begin{eqnarray}
\Pi_{k,t}(\cdot) = \frac{1}{C_k^t}\sum_i\cP_i\left\{\bigl(\Omega^{\otimes t} \otimes \openone^{k-t}\bigr)\right\} D_k(\cdot)\,,
\end{eqnarray}
where $C_k^t$ is the binomial coefficient to normalize all the possible random permutations.

Considering that all the $k$ copies are either the target state $\ket{\psi}$ or a noisy state
\begin{eqnarray}\label{eq:white}
\sigma = (1-q)\ket{\psi}\bra{\psi} + q\openone/d\,,
\end{eqnarray}
the infidelity is given by ${\epsilon=q(d-1)/d}$ with ${d=2^n}$. 
This is a general form which any states can be transformed into through random operations \cite{foldager_can_2023, dalzell_random_2024}, usually satisfied in large-scale experiments \cite{urbanek_mitigating_2021, mi_information_2021}. For maximally entangled states, such a transformation can even retain the fidelity \cite{bennett_purification_1996, horodecki_reduction_1999}.
The following theorem presents the main result of our collective scheme.
\begin{theorem}\label{thm:main}
A target state $\ket{\psi}$ can be verified by the collective strategy $\Pi_{k,t}$ within infidelity $\epsilon$ and confidence level ${1-\delta}$ via
\begin{eqnarray}
M = \ln\delta^{-1}/\ln p^{-1} \approx \frac{2}{(1-\lambda)t+k}\epsilon^{-1}\ln\delta^{-1}
\end{eqnarray}
rounds of testing, and ${N = tM}$ number of samples.
Simultaneously, ${(k-t)M}$ copies of the output states $\sigma'$ with a smaller infidelity of $\epsilon/2$ are produced.
\end{theorem}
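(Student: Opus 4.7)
The plan is to analyze one round of $\Pi_{k,t}$ applied to the worst-case depolarized input $\sigma^{\otimes k}$ of Eq.~(\ref{eq:white}) to linear order in $\epsilon$. The round count $M = \ln\delta^{-1}/\ln p^{-1}$ follows from the standard QSV hypothesis-test argument (the probability of $M$ consecutive false acceptances of an $\epsilon$-bad state is $\le p^M \le \delta$), so the first half of the theorem reduces to estimating the per-round acceptance probability $p$; the fidelity boost is then an independent computation of the reduced state of the surviving $k-t$ copies conditional on acceptance. Throughout I would write $\sigma = P + \epsilon\Delta$ with $P = \ketbra{\psi}$ and $\Delta = -P + (\openone-P)/(d-1)$, so that all expansions are manifestly linear in $\epsilon$.

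For the acceptance probability I factor $p = p_{\rm swap}\cdot p_{\rm QSV\mid swap}$. Using $D_k^\dagger D_k = (2\openone + S_k + S_k^\dagger)/4$ together with the permutation-trace identity $\tr[S_k\,A_1\otimes\cdots\otimes A_k] = \tr[A_1\cdots A_k]$, the SWAP success probability becomes $p_{\rm swap} = (1+\tr\sigma^k)/2$, and the identity $P^m = P$ gives $\tr\sigma^k = 1-k\epsilon + O(\epsilon^2)$, hence $p_{\rm swap} = 1-k\epsilon/2 + O(\epsilon^2)$, which supplies the $+k$ in the denominator. The random permutation $\cP$ fully symmetrizes $D_k\sigma^{\otimes k}D_k^\dagger/p_{\rm swap}$ over all $C_k^t$ subsets, so $\Omega^{\otimes t}$ sees a permutation-symmetric reduced state on $t$ copies; I would argue that this marginal equals $\bar\rho^{\otimes t}$ to linear order in $\epsilon$ (cross-site correlations introduced by $D_k$ being $O(\epsilon^2)$ after the partial trace), where $\bar\rho$ is the single-copy marginal analysed below. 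Applying the standard QSV bound $\tr[\Omega\,\xi] \le \lambda + (1-\lambda)\bra{\psi}\xi\ket{\psi}$ factor-by-factor then yields $p_{\rm QSV\mid swap} = 1-(1-\lambda)t\epsilon/2 + O(\epsilon^2)$, so $p = 1-[(1-\lambda)t+k]\epsilon/2 + O(\epsilon^2)$ and $\ln p^{-1} \approx 1-p$ gives the claimed $M$.

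For the output infidelity I would exploit $D_k\ket{\psi}^{\otimes k} = \ket{\psi}^{\otimes k}$ (since $\ket{\psi}^{\otimes k}$ is a $+1$ eigenstate of $S_k$) to compute the global fidelity of the post-SWAP state with $\ket{\psi}^{\otimes k}$ directly as $(1-\epsilon)^k/p_{\rm swap} = 1-k\epsilon/2 + O(\epsilon^2)$: the global infidelity has been halved. Permutation symmetry under $\cP$ makes every single-copy marginal identical, so to leading order the per-copy infidelity of the surviving state is $\epsilon/2$, as claimed. The subsequent QSV measurement does not disturb the retained marginals at this order because $\Omega^{\otimes t}$ acts on disjoint tensor factors and $P$ is a $+1$ eigenstate of $\Omega$, so any back-action on the retained sites is at most $O(\epsilon^2)$.

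The main technical obstacle is the bookkeeping around $D_k$ being a non-Hermitian Kraus operator for $k>2$ (so $S_k^\dagger = S_k^{-1} \ne S_k$): the $S_k$ and $S_k^\dagger$ branches must both be carried through the $O(\epsilon)$ expansion, and verifying that the linear-order cross-site correlations produced by the mixed terms $S_k X_j$ and $X_j S_k^\dagger$ vanish after the partial trace and the $\cP$-average is the key symmetry identity that makes the coefficient collapse cleanly to $(1-\lambda)t+k$; the random permutation $\cP$ is essential here because it removes the residual position-dependence of the first-order terms and turns a sum over $C_k^t$ subsets into a single symmetric contribution.
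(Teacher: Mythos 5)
Your overall route is the same as the paper's: expand $D_k(\cdot)=D_k(\cdot)D_k^\dagger$ into the four terms $\tfrac14(\rho+S_k\rho S_k^\dagger+S_k\rho+\rho S_k^\dagger)$, evaluate the cross terms with the cyclic-trace identity $\tr[S_k\,A_1\otimes\cdots\otimes A_k]=\tr[A_1\cdots A_k]$, and expand to first order in $\epsilon$. Your factorization $p=p_{\rm swap}\cdot p_{\rm QSV\mid swap}$ with $p_{\rm swap}=(1+\tr\sigma^k)/2=1-k\epsilon/2+O(\epsilon^2)$ is correct and reproduces the paper's exact result $p=\tfrac12[\tr(\Omega\sigma)]^t+\tfrac12\tr[(\Omega\sigma)^t\sigma^{k-t}]$ at first order; the paper simply keeps the closed form exact and expands at the end, which also delivers the $O(\epsilon^2)$ coefficient and the limits $k\to\infty$ for free. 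Your claim that the measured $t$-copy marginal may be treated as $\bar\rho^{\otimes t}$ at $O(\epsilon)$ is true \emph{for the expectation of the product observable} $\Omega^{\otimes t}$, which is all you need, but as stated about the marginal itself it is asserted rather than proven; the clean way to close it is the commuting-operator identity above.

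The one step whose justification is actually invalid as written is the output-infidelity argument. From ``the global fidelity with $\ket{\psi}^{\otimes k}$ is $1-k\epsilon/2$'' plus ``all single-copy marginals are identical by permutation symmetry'' you cannot conclude that each marginal has infidelity $\epsilon/2$: global infidelity is not additive over sites. For a permutation-symmetric state the per-copy infidelity can be anywhere between $(1-F_{\rm glob})/k$ (a symmetrized single-site error, e.g.\ $\tfrac1{\sqrt k}\sum_j\cP_j\{\ket{\psi^\perp}\otimes\ket{\psi}^{\otimes(k-1)}\}$) and $1-F_{\rm glob}$ (a fully correlated error $\ket{\psi^\perp}^{\otimes k}$), so the factor of $k$ you divide by is exactly what needs to be justified. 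The fix is the direct computation the paper does: tracing the four terms of $D_k\sigma^{\otimes k}D_k^\dagger$ over all but one site gives the unnormalized marginal $\sigma'=(\sigma+\sigma^k)/2$, whence $\bra{\psi}\sigma'\ket{\psi}/\tr\sigma'=1-\epsilon/2+O(\epsilon^2)$; conditioning on the $\Omega^{\otimes t}$ outcome then replaces $\sigma^{\otimes t}$ by $(\Omega\sigma)^{\otimes t}$ inside the same traces and one checks explicitly (as the paper does, obtaining $\epsilon'=\epsilon/2+\tfrac14[k-(1-\lambda)t]\epsilon^2+O(\epsilon^3)$) that the measurement back-action cancels against the change in normalization at first order. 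Your intuition for why the conditioning is harmless is right, but it too should be backed by this computation rather than by the disjoint-tensor-factor remark, since the post-SWAP state \emph{is} correlated across the measured/unmeasured cut at $O(\epsilon)$ through the $S_k\rho$ and $\rho S_k^\dagger$ branches you correctly identify as the delicate terms.
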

\begin{proof}
Here, we provide a brief proof for the simple case when ${t=1}$, and a detailed proof of the general case is postponed to Appendix B \cite{supp}. The white-noise mixed state in Eq.~\eqref{eq:white} can be rewritten as
\begin{eqnarray}
\sigma = (1-\epsilon)\ket{\psi}\bra{\psi} + \epsilon\bigl(\openone - \ket{\psi}\bra{\psi}\bigr)/(d-1)\,,
\end{eqnarray}
where ${\openone - \ket{\psi}\bra{\psi} = \sum_i \ketbra{\psi^\perp_i}}$ represents the orthogonal subspace of the target state $\ket{\psi}$. The standard QSV measurement $\Omega$ can always be decomposed into such an orthogonal subspace, thus we have
\begin{eqnarray}
\tr(\Omega\sigma) = (1-\epsilon) + \frac{1}{d-1}\sum_i\lambda_i\epsilon
\leq 1 - \epsilon + \lambda\epsilon\,,
\end{eqnarray}
where $\lambda_i$s are eigenvalues of $\Omega$, and ${\lambda = \max_i{\lambda_i}}$. 
The inequality is saturated for homogeneous measurements \cite{Zhu.Hayashi2019c, Zhu.Hayashi2019d, Liu_homo_2023}. 
When ${t=1}$, the only reduced state after the SWAP projection is $\sigma' = \bigl(\sigma + \sigma^k \bigr)/2$.

After random permutations, the passing probability for the white-noise mixed state of the collective scheme is
\begin{eqnarray}
p 
&=& 1/k\sum_i\tr\left[\cP_i\bigl\{(\Omega\otimes\openone)\bigr\} D_k\bigl(\sigma^{\otimes k}\bigr)\right] = \tr\bigl(\Omega\sigma'\bigr)\nonumber\\
&=& 1 - (k+1-\lambda)\epsilon/2 +O\bigl(\epsilon^2\bigr)\,.
\end{eqnarray}
Hence, if only one state in the ensemble is measured, the sample complexity is given by
\begin{eqnarray}
N = M = \ln\delta^{-1}/\ln p^{-1} \approx \frac{2}{1-\lambda+k}\epsilon^{-1}\ln\delta^{-1}\,.
\end{eqnarray}
And the fidelity of the unmeasured states can be obtained as ${F = \tr\!\left[\bigl(\Omega\otimes\ketbra{\psi}\otimes\openone\bigr) D_k\bigl(\sigma^{\otimes k}\bigr)\right]\!/p}$, so the infidelity is
\begin{eqnarray}
\epsilon' = \epsilon/2 + (k-1+\lambda)\epsilon^2/4 + O\bigl(\epsilon^3\bigr)\approx\epsilon/2\,.
\end{eqnarray}
\end{proof}

The passing probability of the collective scheme is comprised of two parts, namely the measurement on the ancilla of the SWAP projection and the measurement of standard QSV on the $t$ samples. 
The verification is considered as successful only when both of the measurements succeed. 
The power of our collective strategy can be showcased even in the simplest setting $\Pi_{2,1}$, where the efficiency $2/(1-\lambda+2)$ is strictly smaller than that of the optimal global verification.

In principle, implementing the collective strategy on a larger ensemble can improve the efficiency infinitely in terms of both the time and sample consumption, i.e., 
\begin{eqnarray}
    \lim_{k \to \infty} M = 0\,, \qquad \lim_{k \to \infty,\, k \gg t} N = 0\,.
\end{eqnarray}
A larger measured subset can speed up the verification but worsen the sample complexity. 
While, the improvement of unmeasured states, without strong dependence on the size of the ensemble and measured subset, can always be maintained. 
Notably, as long as no more than half of the ensemble is measured, the collective scheme can beat the optimal global verification in both the time and sample consumption, i.e.,
\begin{eqnarray}
    M \leq N < N_{\rm opt}\,,\qquad \text{for~} 1 \leq t \leq k/2\,.
\end{eqnarray}
Even counting in both the measured and unmeasured states, the sample complexity of the collective scheme remains manageable. 
For instance, the most efficient scenario is to take ${t=1}$, namely $\Pi_{k,1}$, in total
\begin{eqnarray}
    kM \approx \frac{2}{(1-\lambda)/{k}+1}\epsilon^{-1}\ln\delta^{-1} < 2\epsilon^{-1}\ln\delta^{-1}\,
\end{eqnarray}
samples are needed.
This property leads to various benefits for practical applications, of which we discuss in the following.

One notes that the SWAP projection $D_k$ extends the operation in Ref.~\cite{ricci_experimental_2004}, which purifies single qubits. 
Our generalization is designed to act only on the two nearest-neighbor quantum states, making it experimentally more feasible, as illustrated in Fig.~\ref{fig:SWAP}. 
Beyond the challenge of implementing collective operations, quantum memory is another significant limit in practice. 
In principle, quantum memory is only needed to temporarily store quantum states prior to measurements. 
Leveraging this, our collective scheme can be implemented by sequentially interacting one registered quantum state with others, measuring only the registered one, then discarding the others. 
This variation is equivalent to the one constructed with $D_k$, which can be scaled up to achieve arbitrarily high efficiency, but without increasing the memory.
See Appendix G for more details \cite{supp}.

\textit{The general scenario.---}%
Up to this point, we have considered the ensemble with independent distribution under the white noise model. The generality of this consideration with other types of noise is detailed in Appendix C \cite{supp}. 
In this section, we move on to discuss the scenario where correlated noise exists within our collective strategy. 
In standard QSV frameworks, the correlation between multiple samples is treated as an adversarial scenario, such that a powerful adversary can control all the samples throughout the entire verification procedure which leads to more sample consumption \cite{Zhu.Hayashi2019c, Zhu.Hayashi2019d}.

Here, we assume a more realistic scenario that the correlated noise (or an adversary) can affect at most the entire ensemble of each round independently, that is, $k$ samples. 
We model the global white noise on the ensemble as
\begin{eqnarray}
    \eta = (1-q)\ket{\psi}\bra{\psi}^{\otimes k} + q\openone/d^k\,,
\end{eqnarray}
where the infidelity of each sample is ${\epsilon=q(d-1)/d}$. 
Note that if one focuses on the individual sample in each ensemble, it reduces to the scenario under independent white noise as in Eq.~\eqref{eq:white}, making the correlation indistinguishable without the collective strategy.
For the correlated noise, we have the following theorem.
\begin{theorem}\label{thm:Gwhite}
A target state $\ket{\psi}$ can be verified by the collective strategy $\Pi_{k,t}$ under global white noise within infidelity $\epsilon$ and confidence level ${1-\delta}$ via
\begin{eqnarray}
    M \approx \frac{2}{(1-\lambda)t+1}\epsilon^{-1}\ln\delta^{-1}
\end{eqnarray}
rounds of testing, and ${N=tM}$ number of samples.
Simultaneously, ${(k-t)M}$ copies of the output states $\sigma'$ with a smaller infidelity of $\epsilon/2$ are produced.
\end{theorem}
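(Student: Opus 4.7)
The plan is to parallel the structure of the proof of Theorem 1, but to treat the ensemble state $\eta$ as a single correlated object rather than a tensor product of independent noisy copies. First I would use the decomposition $\eta = (1-q)\ketbra{\psi}^{\otimes k} + q\openone/d^k$ together with the observation that $S_k\ket{\psi}^{\otimes k}=\ket{\psi}^{\otimes k}$, so that the target component is preserved by $D_k$ and contributes a clean $(1-q)$ to the pass probability. The analysis then reduces to computing how the collective scheme acts on the maximally mixed piece $\openone/d^k$.

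Next I would expand $D_k D_k^\dagger = (2\openone + S_k + S_k^{k-1})/4$ and write the overall pass probability as
\begin{equation}
p = (1-q) + \frac{q}{d^k}\,\tr\!\bigl[\bar{\Omega}_{k,t}\, D_k D_k^\dagger\bigr],
\end{equation}
where $\bar{\Omega}_{k,t} \equiv \tfrac{1}{C_k^t}\sum_i \cP_i\{\Omega^{\otimes t}\otimes\openone^{k-t}\}$ is the permutation-averaged measurement operator. The key technical identity is the cyclic-trace formula $\tr[(A_1\otimes\cdots\otimes A_k)\,S_k] = \tr(A_1 A_2 \cdots A_k)$, which reduces each trace against $S_k^{\pm 1}$ to a combination of $\tr(\Omega^t)$ and $\tr(\Omega)^t$ (the latter arising from the identity part). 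Under the homogeneous assumption used in the proof of Theorem 1, $\tr(\Omega)=1+(d-1)\lambda$ and $\tr(\Omega^t)=1+(d-1)\lambda^t$, so everything becomes elementary in $d$ and $\lambda$. Expanding to leading order in $\epsilon = q(d-1)/d$ (with $d\gg 1$) should then collapse the coefficient to $((1-\lambda)t+1)/2$, giving $M \approx 2/((1-\lambda)t+1)\cdot\epsilon^{-1}\ln\delta^{-1}$.

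The conceptual point distinguishing this from Theorem 1 is that the correlated noise affects the whole ensemble uniformly: unlike the independent case, where the SWAP test produces a $\sigma^k$ contribution suppressing the noise by a factor of $k$, here the analogous reduction yields only a single-power suppression, which translates into ``$+1$'' in the denominator rather than ``$+k$''. For the $k-t$ unmeasured copies, I would partial-trace the unnormalized post-measurement state down to one retained site and evaluate its overlap with $\ket{\psi}\bra{\psi}$, normalized by $p$. The steps mirror the analogous computation in the proof of Theorem 1 and should deliver $\epsilon' \approx \epsilon/2$ at leading order.

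The main obstacle will be the bookkeeping required when the measurement is not applied uniformly across all $k$ sites: one must verify that permutation averaging together with the cyclic action of $S_k$ produces the clean linear-in-$t$ coefficient $(1-\lambda)t+1$ rather than a nonlinear dependence in $\lambda$ and $t$, and confirm that subleading $O(1/d^j)$ contributions from the $\tr(\Omega^t)/d^k$ term may be dropped without spoiling the advertised $\epsilon^{-1}$ scaling of the sample complexity.
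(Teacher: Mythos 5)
Your proposal is correct and follows essentially the same route as the paper: the paper likewise splits $\eta$ by linearity into the invariant $\ket{\psi}^{\otimes k}$ part plus the maximally mixed part, and evaluates the latter by reusing the cyclic-trace identity behind Eq.~\eqref{eq:p_iid}, obtaining $\tr[(\Omega^{\otimes t}\otimes\openone)D_k(\openone/d^k)]=\tfrac12[\lambda+(1-\lambda)/d]^t+\tfrac{(d-1)\lambda^t+1}{2d^k}$, which is exactly what your $D_kD_k^\dagger=(2\openone+S_k+S_k^{k-1})/4$ expansion yields. The concern you flag at the end is real but resolved exactly as you anticipate: the exact coefficient \emph{is} nonlinear in $t$ and $\lambda$, and the paper recovers the clean $(1-\lambda)t+1$ (and the $a^t\epsilon/2\le\epsilon/2$ output infidelity, with $a=[\lambda+(1-\lambda)/d]^t$'s base) only after assuming $\lambda$ close to unity and $d^k$ large.
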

\begin{proof}
The proof is similar to that of Theorem~\ref{thm:main}; see Appendix D for details \cite{supp}.
\end{proof}
Under global white noise, the collective scheme does not surpass the optimal global verification in terms of sample complexity, but remains comparable by a constant factor of $2$ with the one-sample scheme $\Pi_{k,1}$, i.e.,
\begin{eqnarray}
    N < 2 N_{\rm opt}\qquad\text{for~}  t=1\,,\forall k\geq 2\,.
\end{eqnarray}
Selecting a larger measured subset can improve the verification speed infinitely, but also increases the sample consumption. 
Meanwhile, the infidelity of unmeasured states can exponentially decrease with the subset size $t$ as $a^t \epsilon/2$, where ${a<1}$ depends on the target state; see Appendix D for details \cite{supp}. 

In scenarios where an adversary can control the entire quantum system with arbitrary global operations, verification becomes more challenging, thus more sample consumption is needed \cite{Zhu.Hayashi2019c}.
However, as long as the adversary cannot control everything, but limited to the entire ensemble (even for large $k$), the collective scheme can guarantee the unmeasured output state within infidelity $\epsilon$ and confidence level ${1-\delta}$ by consuming ${1/(1-\lambda)\epsilon^{-1}\ln\delta^{-1}}$ samples, no more than the requirement of standard QSV under no adversary.
See Appendix E for further discussions \cite{supp}.

\begin{figure}[t]
  \includegraphics[width=0.95\columnwidth]{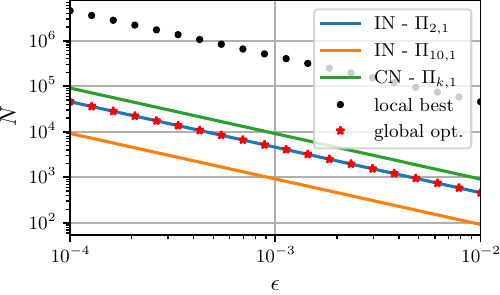}
  \caption{Sample complexity for verifying a $100$-qubit Dicke state using the collective strategies under independent noise with schemes $\Pi_{2,1}$ (blue) and $\Pi_{10,1}$ (orange), and under correlated noise with the scheme $\Pi_{k,1}$ (green), as compared to the best known local scheme \cite{Li.etal2020a} (black dot) and the optimal global scheme (red star). The confidence level is set to $1-\delta=99\%$.}
  \label{fig:p}
\end{figure}

\textit{Applications.---}%
In practical implementations of standard QSV, the process begins by randomly selecting a portion of the prepared samples for verification. 
Random selection is crucial to prevent potential adversarial manipulations, where carefully crafted samples could pass the verification while others could deceive the users. 
However, our collective scheme allows for a synchronous verification and utilization of the target states, which is particularly advantageous for online tasks. 
By beating the optimal global verification, our approach ensures that the sample complexity remains manageable for any target state. 
For instance, in order to guarantee ${\epsilon=1\%}$ and ${1-\delta=99\%}$, no more than ${\lceil \epsilon^{-1}\ln\delta^{-1} \rceil = 461}$ rounds of the verification are required by using the scheme $\Pi_{k,t}$ ($t\leq k/2$), and larger ensembles lead to less time. 
If a task demands $1024$ copies of a verified target state, the simplest setup $\Pi_{2,1}$ needs to consume another $1024$ copies and $1024$ rounds of tests.
With larger ensembles, for instance the scheme $\Pi_{9,1}$, only $128$ additional samples and $128$ rounds of tests are needed.

Depending on the specific target state, different levels of improvement can be achieved by the collective scheme.
For the simplest case of Bell states, the optimal QSV scheme with local operations has a sample complexity of ${N_{\rm Bell}\approx \frac{3}{2}\epsilon^{-1}\ln\delta^{-1}}$.
Using $2k$ controlled qubit-qubit SWAP operations, our collective scheme $\Pi_{k,1}$ can beat the optimal global verification with the sample complexity ${{N\approx\frac{3}{4}\epsilon^{-1}\ln\delta^{-1} < N_{\rm opt}}<N_{\rm Bell}}$. 

Nevertheless, a significant advantage of our collective scheme lies in the verification of more complex multipartite entangled states, whose projection is difficult to realize (if not simply impossible) and the sample consumption increases rapidly with the system size.
In Fig.~\ref{fig:p}, we consider the verification of a $100$-qubit Dicke state.
Our collective scheme with the simplest setup $\Pi_{2,1}$ beats the optimal global verification,  and outperforms the best local QSV scheme by two orders of magnitude \cite{Li.etal2020a}. 
A larger ensemble with the scheme $\Pi_{10,1}$ provides an additional five-fold improvement. 
Moreover, even under the global noise on the entire ensemble, which is typically difficult to verify with standard QSV, our approach still demonstrates a notable improvement.
Details including the enhancement of unmeasured states and the tradeoff between the time and sample consumption are discussed in Appendix F \cite{supp}.

\textit{Discussions.---}%
The collective strategies in QSV not only conserve resources but also offer the additional advantage of distinguishing different noise types. 
Simply put, standard QSV performs a certain number of successful rounds of testing $M_s$ to verify the entangled state within infidelity $\epsilon$ and confidence level $1-\delta$.
However, due to the higher efficiency of the collective scheme, states with independent white noise may not pass with the same number of rounds $M_s$, enabling us to differentiate correlated noise from independent noise. 
With additional samples and data processing, our collective can thus identify noise variations, aiding targeted state preparation.
Be noted that this method is general and applicable to various noise types; see Appendix F for examples \cite{supp}.

In real experiments \cite{ Zhang.etal2020a,Zhang.etal2020b, Jiang.etal2020}, noise is inevitable, nevertheless an entangled state can still be verified within infidelity $\epsilon$ and significance level ${\exp(-D[f||(1-\epsilon+\lambda\epsilon)]N_{\rm total})}$ based on the overall passing probability $f$ out of $N_{\rm total}$ tests, where $D(x||y)=x\log\frac{x}{y}+(1-x)\log\frac{1-x}{1-y}$ is the Kullback–Leibler divergence \cite{Yu.etal2019}. 
Importantly, in QSV, the infidelity $\epsilon$ incorporates all errors, including gate noise and imperfect state preparation. Consequently, noisy gates do not degrade the verification efficiency, rather, prior knowledge of gate noise can reduce the sample consumption, preserving the advantage of our collective scheme even in noisy settings, as demonstrated with a specific example in Appendix G \cite{supp}.

In principle, our collective scheme is fundamentally effective for verifying entangled states with infidelity ${\epsilon\!<\!1/2}$. This limitation corresponds to a threshold for state purification, which is a reasonable assumption in practical applications.
Apart from the arbitrarily high verification efficiency with larger ensembles as ${k\!\to\!\infty}$, another advantage of our scheme is the straightforward extension to arbitrary multipartite entangled states. This contrasts with the work of Ref.~\cite{miguel-ramiro_collective_2022}, which demands entanglement purification and works only for specific types of entangled states.

One notices that a notable feature of our scheme is the use of a single ancillary qubit only. While this approach conserves resources, it poses challenges for constructing distributed verification protocols for multipartite entangled states. However, this issue can be addressed by employing a maximally entangled state on qubits instead of a single ancillary qubit. 
With the entangled ancilla, which does not need to be perfect, our distributed collective scheme is similar to that of Refs.~\cite{miguel-ramiro_collective_2022, chen_memoryQSV_2025}, which can be realized locally, yet no higher-dimensional local systems are needed. A detailed comparison with related works can be found in Appendix G \cite{supp}.

\textit{Summary.---}%
We have proposed an arbitrarily high efficient scheme for quantum state verification using collective strategies. 
Drawing inspiration from quantum state purification, our collective QSV scheme not only outperforms the optimal global verification for arbitrary entangled states but also enhances the unmeasured states for any subsequent tasks. 
Importantly, the arbitrarily high efficiency can be achieved by scaling our scheme with larger ensembles, but without increasing the quantum memory. Our scheme is also suitable for larger systems with a linear scaling on hardware requirement.
Additionally, our collective scheme is capable of distinguishing different noise types, a unique feature that non-collective strategies cannot have. 
This capability is advantageous for targeted improvements for state preparation. 
Furthermore, we demonstrated the applicability of online tasks and the significant efficiency improvement via specific examples, particularly for large-scale complex multipartite entangled states.

Numerous intriguing aspects of the collective scheme remain unexplored and warrant further investigation. 
For instance, the collective scheme can be naturally generalized to higher-dimensional systems, which needs specific discussion on generalized SWAP projections. 
Also, the SWAP projection can be enhanced through optimization of the entangling sequence, to further improve the performance of the collective scheme. 
There is also a one-to-one correspondence between different purification protocols and collective strategies, which might yield additional benefits. 
Furthermore, collective operations can be effectively combined with other advanced yet well-established measurement techniques, such as nondemolition measurements, to achieve more superior performance.

\acknowledgments
We are grateful to L. Vandr\'{e} for interesting discussions.
This work was supported by the National Natural Science Foundation of China (Grants No.~92265115 and No.~12175014) and the National Key R\&D Program of China (Grant No.~2022YFA1404900).
Y.-C. Liu is also supported by the DFG Cluster of Excellence MATH+ (EXC-2046/1, Project No.~390685689) funded by the Deutsche Forschungsgemeinschaft (DFG).

%

\onecolumngrid
\newpage

\appendix
In this Supplemental Material, we provide further details about our collective scheme, which beats the optimal verification for arbitrary entangled states.
In Appendix~\ref{App:SWAP}, we explain how to realize the SWAP projection in details, focusing on the experimental feasibility and scalability of the construction. 
Appendix~\ref{App:ProofThm1} presents the complete proof of the main result, i.e., Theorem~\ref{thm:main}. In the following Appendix~\ref{App:GeneralWhite}, a discussion on the generality of our scheme by considering independent and identically distributed white noise is supplemented. 
Other important cases including the global white noise and global unitary control are discussed in Appendix~\ref{App:ProofThm2} and Appendix~\ref{App:GU}, respectively. 
In Appendix~\ref{App:DetailsApp}, more results are provided by applying our collective scheme on various entangled states.
Finally in Appendix~\ref{App:Comp}, by comparing with other related works, we highlight the advantages of our collective scheme in several aspects.

\section{Realization of the SWAP projection}\label{App:SWAP}
Here we show how to realize a general SWAP projection on $k$ quantum states
\begin{eqnarray}
    D_k=\bigl( \openone+S_k \bigr)/2\,,
\end{eqnarray}
where $S_k$ is the cyclic permutation on all the $k$ states such that the $m$th state is shifted to ${m\!-\!1\!\!\pmod{k}}$. 

The SWAP projection is equivalent to preparing and measuring the ancilla of a controlled SWAP operation under the basis ${\ket{+}=(\ket{0}+\ket{1})/2}$, i.e.,
\begin{eqnarray}
    D_k(\rho) = \tr_{\rm ancilla}\left[\ket{+}\bra{+}\otimes\openone \cdot cS_k \bigl(\ket{+}\bra{+}\otimes \rho \bigr) cS_k^\dag\right]\!,
\end{eqnarray}
where $cS_k$ is the controlled permutation (SWAP) operation on the $k$ states collectively. 
In cold-atom setups \cite{pichler_measurement_2016}, a controlled SWAP operation on two multipartite states
\begin{eqnarray}
    cS_2=cS_{{\rm ancilla},A,B}=\ket{0}\bra{0}_{\rm ancilla}\otimes \openone_{AB}+\ket{1}\bra{1}_{\rm ancilla}\otimes S_{AB}
\end{eqnarray}
can be realized directly, thus we can extend the permutation as
\begin{eqnarray}
    cS_k = \ket{0}\bra{0}_{\rm ancilla}\otimes \openone_{1,2,\cdots,k}+\ket{1}\bra{1}_{\rm ancilla}\otimes S_{1,2,\cdots,k} = cS_{{\rm ancilla},k,k-1} \otimes \cdots \otimes cS_{{\rm ancilla},2,1}\,.
\end{eqnarray}
As shown in Fig.~\ref{fig:SWAP}(a), with the preparation and measurement on the ancilla under the basis $\ket{+}$, we can realize the SWAP projection in such cold-atom platforms.
\begin{corollary}\label{coro:1}
     The SWAP projection on $k$ multipartite quantum states can be realized with $k$ controlled SWAP operations with one ancilla qubit. 
\end{corollary}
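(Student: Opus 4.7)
The plan is to assemble the corollary from two ingredients that are already developed in the surrounding text: the Hadamard-test-style realization of a SWAP projection, and the factorization of the cyclic permutation $S_k$ into transpositions. Concretely, I would first recall that for any unitary $U$ with $U^2=\openone$, the projector $(\openone+U)/2$ is implemented by preparing an ancilla in $\ket{+}$, applying the controlled-$U$, and post-selecting on the ancilla outcome $\ket{+}$. Setting $U=S_k$ gives exactly $D_k$. This is nothing more than the identity already stated in the excerpt, so this first step requires no new calculation.

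The second step is to reduce controlled-$S_k$ to controlled pairwise SWAPs sharing one ancilla. I would decompose the cyclic permutation into transpositions, e.g.\ $S_k = S_{k,k-1}\,S_{k-1,k-2}\cdots S_{2,1}$, and then observe that because every factor is controlled by the \emph{same} ancilla qubit, the controlled version of the product is the product of the individual controlled operations, $cS_k = cS_{\text{ancilla},k,k-1}\cdots cS_{\text{ancilla},2,1}$. This is precisely the decomposition already displayed above the corollary, and the number of factors is $k-1$, which the corollary absorbs into the claimed scaling of $k$ controlled SWAPs (the extra unit counting the ancilla preparation/measurement).

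The third step is to point out that each $cS_{\text{ancilla},i,j}$ acts between two multipartite registers and is treated as a single elementary primitive here, since in cold-atom setups (Rydberg-mediated, as in the Pichler \emph{et al.} protocol cited in the excerpt) such a controlled SWAP on two multipartite states is directly available. Combining the Hadamard-test identity with the factorization then yields the corollary, with resource count as advertised.

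The only genuinely delicate point, and the one I would be most careful about, is the compatibility of controlling a product of non-commuting transpositions with a single ancilla: one must check that the circuit indeed realizes the operator $\ketbra{0}_{\text{ancilla}}\otimes\openone+\ketbra{1}_{\text{ancilla}}\otimes S_k$ rather than a version where the controls dress each transposition individually with projectors. This is immediate because each $cS_{\text{ancilla},i,j}$ acts as the identity on the $\ket{0}$ branch, so the $\ket{0}$ branch of the product is still identity, while the $\ket{1}$ branch is exactly the ordered product of $S_{i,j}$'s, which equals $S_k$. After verifying this, the counting and the diagram in Fig.~\ref{fig:SWAP}(a) finish the argument.
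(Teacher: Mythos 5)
Your proposal is correct and follows essentially the same route as the paper: the Hadamard-test identity realizing $D_k$ via a $\ket{+}$-prepared and $\ket{+}$-measured ancilla, followed by the factorization $cS_k = cS_{{\rm ancilla},k,k-1}\cdots cS_{{\rm ancilla},2,1}$ into controlled pairwise SWAPs sharing the single ancilla. Your explicit check that a single shared control on a product of non-commuting transpositions yields the controlled product (rather than individually projected factors) is a point the paper states without verification, and your observation that the factorization has $k-1$ factors while the corollary quotes $k$ matches the paper's own (slightly loose) counting.
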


In addition, if there exists the constraint that only controlled qubit-qubit SWAP operations are allowed, known also as the Fredkin gate in linear optics systems \cite{patel_quantum_2016, ono_implementation_2017, starek_nondestructive_2018}, we can construct the controlled SWAP operation on two $n$-qubit states as
\begin{eqnarray}
    cS_2 = \bigotimes_{t=1}^n cS_2^{(t)}
    =\ket{0}\bra{0}_{\rm ancilla}\otimes \openone^{(1)} \otimes \cdots \otimes \openone^{(n)}+\ket{1}\bra{1}_{\rm ancilla}\otimes S_2^{(1)} \otimes \cdots \otimes S_2^{(n)}\,,
\end{eqnarray}
where $cS_2^{(t)}$ is the controlled qubit-qubit SWAP operation on the $t$th qubit; see Fig.~\ref{fig:SWAP}(b). 
However, these linear optics gates are probabilistic and it is not possible to easily concatenate them as they usually operate in coincidence bases.

One alternative method to realize the Fredkin gate is to use the Toffoli gate with an additional controlled-$X$ gate.
Or we can directly decompose the Fredkin gate into five two-qubit gates \cite{yu_optimal_2015}, then together with Corollary \ref{coro:1}, we recover Proposition~\ref{Prop:SWAP} in the main text. Similar results have also been discussed in the field of error mitigation \cite{Huggins_2021_virtual, Koczor_2021_exponential, czarnik_2021_qubit, Quek_2024_multivariate}.

Finally, one notes that, for ${k=2}$ and ${n=1}$, the SWAP projection is equivalent to the operation in Ref.~\cite{ricci_experimental_2004}, which realizes the purification of single qubits.
Though it is not directly extendable for entangled states as 
\begin{eqnarray}
    \frac{\openone+S_A}{2}\otimes\frac{\openone+S_B}{2}\neq\frac{\openone+S_{AB}}{2}\,,
\end{eqnarray}
with more complex coincidence bases, projection onto symmetric subspace of entangled states might be possible. 
This in fact offers another potential construction method for our collective scheme, of which no ancilla is demanded.

\section{Proof of Theorem~\ref{thm:main}}\label{App:ProofThm1}
\begin{proof}
    To verify a certain target state $\ket{\psi}$, the construction of standard QSV $\Omega$ demands that $\ket{\psi}$ can always pass the tests. Hence, $\Omega$ can be written as 
    \begin{eqnarray}
        \Omega = \ketbra{\psi} + \sum_i \lambda_i \ketbra{\psi_i^\perp}\,,
    \end{eqnarray}
    where $\ket{\psi_i^\perp}$s are the eigenstates in the orthogonal subspace $\openone\!-\!\ket{\psi}\bra{\psi}$, and $\lambda_i$s are the corresponding eigenvalues that $\lambda_i < 1, \forall i$. 
    The white noisy state is equivalent to
    \begin{eqnarray}
        \sigma = (1-\epsilon)\ket{\psi}\bra{\psi} + \epsilon \bigl(\openone - \ket{\psi}\bra{\psi}\bigr)/(d-1)\,,
    \end{eqnarray}
    thus the passing probability of measuring one sample by the standard QSV is given by
    \begin{eqnarray}
        \tr(\Omega\sigma) = (1-\epsilon) + \frac{1}{d-1}\sum_i\lambda_i\epsilon
        \geq 1 - \epsilon + \lambda\epsilon\,,
    \end{eqnarray}
    where ${\lambda=\max_i{\lambda_i}}$. The inequality is saturated for the homogeneous QSV protocols \cite{Zhu.Hayashi2019c, Zhu.Hayashi2019d, Liu_homo_2023}. 
    After the SWAP projection ${D_k=\bigl(\openone+S_k\bigr)/2}$, where $S_k$ permutes $m$th state to ${m\!+\!1\!\!\pmod{k}}$, the $k$ quantum states are
    \begin{eqnarray}
        D_k \bigl(\sigma_1\otimes\sigma_2\otimes\cdots\sigma_k \bigr)
        &=& \frac{1}{4}\sigma_1\otimes\sigma_2\otimes\cdots\sigma_k 
        + \frac{1}{4}\sigma_2\otimes\sigma_3\otimes\cdots\sigma_1 \nonumber\\
        &+& \frac{1}{4}S_k\cdot\sigma_1\otimes\sigma_2\otimes\cdots\otimes\sigma_k
        + \frac{1}{4}\sigma_1\otimes\sigma_2\otimes\cdots\otimes\sigma_k\cdot S_k^\dagger\,.
    \end{eqnarray}
    \begin{figure}[t]
    \includegraphics[width=0.55\columnwidth]{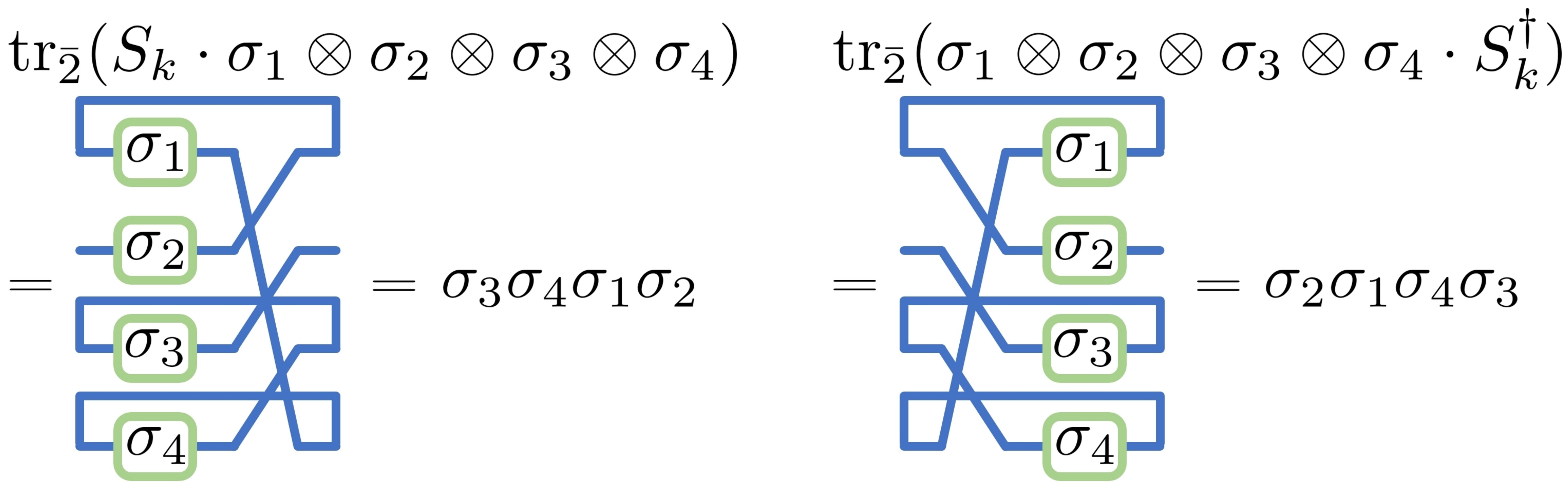}
    \caption{Tensor diagram for solving the (partial) trace of $\bigl(S_k \cdot \sigma_1 \otimes\cdots\otimes \sigma_k \bigr)$. The permutation $S_k$ and its conjugate $S_k^\dagger$ are represented by the cyclic permutations of the indices (legs) of the $\sigma$s.}
    \label{fig:tr}
    \end{figure}
    With the graphical illustration in Fig.~\ref{fig:tr}, when $t=1$, the $m$th reduced state of the $k$-sample ensemble is 
    \begin{eqnarray}\label{eq:Dk}
        \sigma'_m 
        &=& \tr_{\bar{m}}\bigl[D_k\bigl(\sigma_1\otimes\sigma_2\otimes\cdots\sigma_k\bigr)\bigr]\nonumber\\
        &=& \frac{1}{4}\sigma_m 
        + \frac{1}{4}\sigma_{(m+1|k)} 
        + \frac{1}{4}\tr_{\bar{m}}\bigr(S_k\cdot\sigma_1\otimes\sigma_2\otimes\cdots\otimes\sigma_k \bigl) 
        + \frac{1}{4}\tr_{\bar{m}} \bigl(\sigma_1\otimes\sigma_2\otimes\cdots\otimes\sigma_k\cdot S_k^\dagger\bigr) \nonumber\\
        &=& \frac{1}{4}\sigma_m 
        + \frac{1}{4}\sigma_{(m+1|k)}
        + \frac{1}{4}\sigma_{(m+1|k)}\sigma_{(m+2|k)}\cdots\sigma_{(m|k)}   
        + \frac{1}{4}\sigma_{(m|k)}\sigma_{(m-1|k)}\cdots\sigma_{(m-k+1|k)}\,,
    \end{eqnarray}
    where the index ${(i|k) \equiv i\!\!\pmod{k}}$. As all the samples are promised to be the same, the reduced state is
    \begin{eqnarray}
        \sigma' = \bigl(\sigma + \sigma^k\bigr)/2\,,\quad{\rm where~}\sigma^k := \sigma \cdots \sigma \,\,\text{($k$ times)}\,.
    \end{eqnarray}
    Hence when $t=1$, the passing probability can be directly calculated as 
    \begin{eqnarray}
        p 
        &=& 1/k\sum_i\tr\left[\cP_i\left\{\bigl(\Omega\otimes\openone\bigr)\right\} D_k\bigl(\sigma^{\otimes k}\bigr)\right] = \tr\bigl(\Omega\sigma'\bigr)\nonumber\\
        &=& \left[1 - \epsilon + \lambda\epsilon + (1-\epsilon)^k + \frac{\epsilon^k\lambda}{(d-1)^{k-1}}\right]/2 \nonumber\\
        &=& 1 - (k+1-\lambda)\epsilon/2 +O(\epsilon^2)\,.
    \end{eqnarray}
    In general, with an arbitrary selection of the $t$-sized subset to measure, we have
    \begin{eqnarray}
        \tr\left[\mathcal{P}_i\left\{\bigl(\Omega^{\otimes t}\otimes\openone\bigr)\right\}D_k\bigl(\sigma^{\otimes k}\bigr)\right] 
        = \tr\left[\bigl(\Omega^{\otimes t}\otimes\openone\bigr)D_k\bigl(\sigma^{\otimes k}\bigr)\right]
        = \frac{1}{2}\Bigl[\tr(\Omega\sigma)\Bigr]^t 
        + \frac{1}{2}\tr\left[(\Omega\sigma)^t\sigma^{k-t}\right]\!.
    \end{eqnarray}
    Note that we have used the commutability between $\Omega\sigma$ and $\sigma$ here.
    Thus, the passing probability of measuring $t$ samples in the $k$-sample ensemble by the collective QSV is 
    \begin{eqnarray}\label{eq:p_iid}
        p = 1/C_k^t\sum_i\tr\left[\cP_i\left\{\bigl(\Omega^{\otimes t}\otimes\openone\bigr)\right\}D_k\bigl(\sigma^{\otimes k}\bigr)\right] = \frac{1}{2}\left[\tr(\Omega\sigma)\right]^t 
        + \frac{1}{2}\tr\left[(\Omega\sigma)^t\sigma^{k-t}\right]\!.           
    \end{eqnarray}
    And for verifying an entangled state under independent white noise, we have
    \begin{eqnarray}\label{eq:p_IN}
        p_{\rm IN} = \left[(1 - \epsilon + \lambda\epsilon)^t + (1-\epsilon)^k + \frac{\epsilon^k\lambda^t}{(d-1)^{k-1}}\right]/2
        = 1 - [k+t(1-\lambda)]\epsilon/2 +O(\epsilon^2)\,.
    \end{eqnarray}
    Therefore, the time complexity and sample complexity are given by
    \begin{eqnarray}
        M &=& \ln\delta^{-1}/\ln p_{\rm IN}^{-1} \approx \frac{2}{(1-\lambda)t+k}\epsilon^{-1}\ln\delta^{-1}\,,\\
        N &=& tM \approx \frac{2t}{(1-\lambda)t+k}\epsilon^{-1}\ln\delta^{-1}\,.
    \end{eqnarray}
    The fidelity of one of the unmeasured states is 
    \begin{eqnarray}
        F &=& \tr\left[\bigl(\Omega^t\otimes\ketbra{\psi}\otimes\openone\bigr) D_k\bigl(\sigma^{\otimes k}\bigr)\right]/p_{\rm IN} \nonumber\\
        &=& \left\{\frac{1}{2}\Bigl[\tr(\Omega\sigma)\Bigr]^t \bra{\psi}\sigma\ket{\psi}
        + \frac{1}{2}\tr\left[(\Omega\sigma)^t\ketbra{\psi}\sigma^{k-t}\right]\right\}/p_{\rm IN}\nonumber\\
        &=& \left[(1-\epsilon)(1-\epsilon+\lambda\epsilon)^t+(1-\epsilon)^k\right]/(2p_{\rm IN})\,,
    \end{eqnarray}
    and the infidelity is reduced to
    \begin{eqnarray}
        \epsilon' \approx \epsilon \frac{(1-\epsilon+\lambda\epsilon)^t}{(1-\epsilon+\lambda\epsilon)^t+(1-\epsilon)^k}= \frac{1}{2}\epsilon + \frac{1}{4}[k - (1-\lambda)t]\epsilon^2 + O(\epsilon^3)\,.
    \end{eqnarray}
    Note also that the following limits 
    \begin{eqnarray}
        \lim_{k\to\infty}\epsilon' &=& \epsilon\,,\\
        \lim_{t=k\to\infty}\epsilon' &=& \epsilon\,,
    \end{eqnarray}
    hold asymptotically. 
\end{proof}

\section{The generality of white noise}\label{App:GeneralWhite}
White noise is also known as the noise under global depolarizing channel, and for maximally entangled states, it is equivalent to the local depolarizing channel. 
In general, many other noise models correspond to different practical scenarios. 
For instance, an alternative noise model
\begin{eqnarray}
\sigma_{\rm mix} = (1-\epsilon)\ketbra{\psi}+\epsilon\ketbra{\psi^\perp}
\end{eqnarray}
fits many types of noise, including for instance, the amplitude and phase damping channels on maximally entangled states. 
With this noise model, the passing probability of the collective scheme is given by
\begin{eqnarray}
p_{\rm mix} &=& \left[(1 - \epsilon + \lambda\epsilon)^t + (1-\epsilon)^k + \epsilon^k\lambda^t\right]/2\,,
\end{eqnarray}
which is slightly different from the white noise case in the order of $O(\epsilon^k)$. Thus, the main results of our analysis cover a wide range of practical scenarios.

In addition to the environmental noise, imperfect control of operations can lead to
\begin{eqnarray}
\sigma_\text{\sc ur}  = \ketbra{\psi_\epsilon}\,,\quad
\ket{\psi_\epsilon} = \sqrt{1-\epsilon}\ket{\psi}+\sqrt{\epsilon}\ket{\psi^\perp}\,,
\end{eqnarray}
which represents a unitary rotation of the target state. This is also considered as the worst-case scenario in standard QSV \cite{Pallister.etal2018}, with the passing probability of the collective strategy given by
\begin{eqnarray}
p_\text{\sc ur} = (1 - \epsilon + \lambda\epsilon)^t\,,
\end{eqnarray}
which leads to a lower efficiency for both the time and sample complexity.
However, when the target states are maximally entangled, we can always apply the isotropic twirling operation (local random operations) to transform such a noisy state into the white noise type without changing the fidelity \cite{bennett_purification_1996, horodecki_reduction_1999}. 
For other target states, by adding random circuits, any local noise can be transformed into global white noise with a worse fidelity \cite{foldager_can_2023, dalzell_random_2024}. 
Especially true and applicable is for large complex quantum systems, due to the preparation procedure, global white noise can be commonly assumed \cite{urbanek_mitigating_2021, mi_information_2021}.

\section{Proof of Theorem~\ref{thm:Gwhite}}\label{App:ProofThm2}
\begin{proof}
    The SWAP projection satisfies the linearity such that, for the globally correlated white noisy state, 
    \begin{eqnarray}
        D_k(\eta) = (1-q)D_k\bigl(\ketbra{\psi}^{\otimes k}\bigr) + q D_k\bigl(\openone/d^k\bigr)\,,\qquad q=\frac{d\epsilon}{d-1}\,.
    \end{eqnarray}
    Then using Eq.~\eqref{eq:p_iid} in the proof of Theorem \ref{thm:main}, we have 
    \begin{eqnarray}
        \tr\left[\bigl(\Omega^{\otimes t}\otimes\openone\bigr) D_k\bigl(\ketbra{\psi}^{\otimes k}\bigr)\right] &=& 1\,,\\
        \tr\left[\bigl(\Omega^{\otimes t}\otimes\openone\bigr) D_k\bigl(\openone/d^k\bigr)\right] &=& \frac{1}{2}\left[\tr(\Omega\frac{\openone}{d})\right]^t 
        + \frac{1}{2}\tr\left[(\Omega\frac{\openone}{d})^t(\frac{\openone}{d})^{k-t}\right]\nonumber\\
        &=&\left(\left[\lambda+\frac{1-\lambda}{d}\right]^t + \frac{(d-1)\lambda^t+1}{d^k}\right)/2\,.
    \end{eqnarray}
    A direct derivation gives us the passing probability of the globally correlated noisy state as
    \begin{eqnarray}\label{eq:p_CN}
        p_{\rm CN} 
        &=& 1-\frac{d\epsilon}{d-1} 
        + \frac{d\epsilon}{2(d-1)}\left(\left[\lambda+\frac{1-\lambda}{d}\right]^t + \frac{(d-1)\lambda^t+1}{d^k}\right) \nonumber\\
        &=& 1 - \epsilon\left[ \frac{d-1/d^{k-2}}{2(d-1)} + \frac{1}{2}(1+\frac{1}{d^{k-1}})t(1-\lambda) - O[(1-\frac{1}{d}+\frac{1}{d^{k-1}})t^2(1-\lambda)^2] + \cdots\right] \nonumber\\
        &\approx&  1 - \epsilon\left[1+(1-\lambda)t\right]/2\,,
    \end{eqnarray}
    where the expansion and approximation take into account two aspects for the large-scale multipartite entangled state respectively: (1) the second-largest eigenvalue of the standard QSV is, in general, close to unity; 
    (2) the dimension $d$ is on a large scale, so is the term $d^k$. 
    Therefore, the time complexity and sample complexity are
    \begin{eqnarray}
        M &=& \ln\delta^{-1}/\ln p_{\rm CN}^{-1} \approx \frac{2}{(1-\lambda)t+1}\epsilon^{-1}\ln\delta^{-1}\,,\\
        N &=& tM \approx \frac{2t}{(1-\lambda)t+1}\epsilon^{-1}\ln\delta^{-1}\,.
    \end{eqnarray}
    The derivation is similar to the fidelity that with
    \begin{eqnarray}
        \tr\left[\bigl(\Omega^{\otimes t}\otimes\ketbra{\psi}\otimes\openone\bigr) D_k\bigl(\ketbra{\psi}^{\otimes k}\bigr)\right] &=& 1\,,\\
        \tr\left[\bigl(\Omega^{\otimes t}\otimes\ketbra{\psi}\otimes\openone\bigr) D_k\bigl(\openone/d^k\bigr)\right] &=& \frac{1}{2}\left[\tr(\Omega\frac{\openone}{d})\right]^t \bra{\psi}\frac{\openone}{d}\ket{\psi}
        + \frac{1}{2}\tr\left[(\Omega\frac{\openone}{d})^t \ketbra{\psi} (\frac{\openone}{d})^{k-t}\right]\nonumber\\
        &=&\left(\left[\lambda+\frac{1-\lambda}{d}\right]^t \frac{1}{d} + \frac{1}{d^k}\right)/2\,,
    \end{eqnarray}
    we have 
    \begin{eqnarray}
        F = \left[(1-q)+q\left(\left[\lambda+\frac{1-\lambda}{d}\right]^t\frac{1}{d}+\frac{1}{d^k}\right)/2\right]/p_{\rm CN} \,,\qquad q=\frac{d\epsilon}{d-1}\,.
    \end{eqnarray}
    Then the infidelity is reduced to
    \begin{eqnarray}
        \epsilon' 
        &=& \epsilon \frac{d^k(d-1) \left(\lambda+\frac{1-\lambda}{d}\right)^t}{2d^k(d-1) -2d^{k+1}\epsilon +d^{k+1}\left(\lambda+\frac{1-\lambda}{d}\right)^t \epsilon} \nonumber\\
        &\approx& \frac{1}{2}\left(\lambda+\frac{1-\lambda}{d}\right)^t\epsilon + \frac{d}{4(d-1)}\left[2-\left(\lambda+\frac{1-\lambda}{d}\right)^t\right]\left(\lambda+\frac{1-\lambda}{d}\right)^t\epsilon^2 + O(\epsilon^3)\,.
    \end{eqnarray}
    Note that the eigenvalue is ${0<\lambda<1}$, thus ${a = \left(\lambda+\frac{1-\lambda}{d}\right)^t<1}$ such that the infidelity of unmeasured states decreases ${\epsilon' < \epsilon/2}$. 
    More specifically, this is an exponential decrease $\epsilon'\approx a^t\epsilon/2$ in terms of the unmeasured subset size $t$. 
    The improvement of the state fidelity could be significant for Bell states and large-scale multipartite entangled states; see Appendix~\ref{App:DetailsApp} for examples.
\end{proof}

\section{The global unitary control}\label{App:GU}
In principle, the most powerful adversary is able to apply global unitary controls on the whole ensemble, leading to a noisy state of the form
\begin{eqnarray}
    &&\eta_{\rm GU} = \ket{\phi}\bra{\phi}\,,\nonumber\\
    &&\ket{\phi} = \sqrt{1-\varepsilon}\ket{\psi}^{\otimes k} + \sqrt{\varepsilon}\ket{\phi'}\,,
\end{eqnarray}
where $\ket{\phi'}$ is a pure state resulting from a global unitary rotation of the ensemble of the target state $\ket{\psi}^{\otimes k}$. 
Each reduced state has the fidelity $\tr[(\ketbra{\psi}\otimes\openone) \ketbra{\phi}]=1-\epsilon$.
\begin{theorem}\label{thm:GUR}
    A target state \ket{\psi} can be verified by the collective strategy $\Pi_{k,t}$ under global unitary control within infidelity $\epsilon$ and confidence level $1-\delta$ via
    \begin{eqnarray}
        M = \ln\delta^{-1}/\ln p^{-1} \approx \frac{1}{(1-\lambda)t}\epsilon^{-1}\ln\delta^{-1}
    \end{eqnarray}
    rounds of testing, and ${N=tM}$ number of samples. Simultaneously, ${(k-t)M}$ copies of the output states $\sigma'$ with an increased infidelity of ${\epsilon'=\epsilon+(1-\lambda)t\epsilon^2}$ are produced.
\end{theorem}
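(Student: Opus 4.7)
The plan is to replicate the structure of the proofs of Theorems~\ref{thm:main} and~\ref{thm:Gwhite}, now applied to the pure adversarial state $\eta_{\rm GU}=\ketbra{\phi}$ with $\ket{\phi}=\sqrt{1-\varepsilon}\ket{\psi}^{\otimes k}+\sqrt{\varepsilon}\ket{\phi'}$. Without loss of generality I will take $\ket{\phi'}\perp\ket{\psi}^{\otimes k}$, since any overlap with the ideal component can be absorbed into the coefficient $\sqrt{1-\varepsilon}$. I then need to compute the passing probability $p=\bra{\phi}D_k^\dagger(\Omega^{\otimes t}\otimes\openone^{k-t})D_k\ket{\phi}$ and the fidelity of one unmeasured copy, $F=\bra{\phi}D_k^\dagger(\Omega^{\otimes t}\otimes\ketbra{\psi}\otimes\openone^{k-t-1})D_k\ket{\phi}/p$, both in the worst case over $\ket{\phi'}$ subject to the per-copy reduced-infidelity constraint being $\epsilon$.

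The key structural observation is that $\ket{\psi}^{\otimes k}$ is fixed by $S_k$, by $\Omega^{\otimes t}\otimes\openone^{k-t}$, and by $\ketbra{\psi}$ inserted at any position; combined with $\ket{\phi'}\perp\ket{\psi}^{\otimes k}$, this makes all cross terms vanish, so both $p$ and $F$ collapse to the ideal contribution plus a quadratic form in $\ket{\phi'}$. Moreover, the adversary gains nothing by placing weight outside the $+1$-eigenspace of $S_k$, because $D_k$ suppresses such components; so I restrict to $S_k$-symmetric $\ket{\phi'}$, for which $D_k\ket{\phi'}=\ket{\phi'}$. This is precisely where the $k$-dependent boost enjoyed by the independent-noise case of Theorem~\ref{thm:main} is lost: the SWAP projection becomes inert against a symmetric adversary.

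Within the symmetric subspace, I will use the explicit adversary $\ket{\phi'}=\frac{1}{\sqrt{k}}\sum_{j=1}^{k}\ket{e_j}$, where $\ket{e_j}$ is the product state with $\ket{\psi_*}$ in copy $j$ and $\ket{\psi}$ in every other copy, and $\ket{\psi_*}\perp\ket{\psi}$ is a $\lambda$-eigenvector of $\Omega$. A direct reduced-state computation yields single-copy infidelity $\varepsilon/k$, so enforcing per-copy infidelity $\epsilon$ fixes $\varepsilon=k\epsilon$. Because $\Omega^{\otimes t}\otimes\openone^{k-t}$ acts with eigenvalue $\lambda$ on the $t$ terms whose excited position is measured and $1$ on the remaining $k-t$ terms, averaging gives $p=1-t(1-\lambda)\epsilon$, from which $M=\ln\delta^{-1}/\ln p^{-1}\approx\frac{1}{(1-\lambda)t}\epsilon^{-1}\ln\delta^{-1}$ and $N=tM$ follow directly.

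For the output infidelity, inserting $\ketbra{\psi}$ at position $t+1$ in the same expansion eliminates the single term with $\ket{\psi_*}$ in that slot, producing $F\cdot p=1-\epsilon[1+t(1-\lambda)]$ and therefore $\epsilon'=\epsilon/[1-t(1-\lambda)\epsilon]=\epsilon+t(1-\lambda)\epsilon^2+O(\epsilon^3)$, matching the stated bound. The main obstacle will be rigorously justifying that this single-excitation adversary is the actual extremum: different symmetric adversaries optimize different quantities (e.g.\ $\ket{\phi'}=\ket{\psi_*}^{\otimes k}$ also achieves the leading-order $p$ but yields $\epsilon'\approx\lambda^t\epsilon<\epsilon$, corrupting nothing on the unmeasured side). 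A complete proof requires a constrained optimization—plausibly via Lagrange multipliers on the space of $S_k$-symmetric $\ket{\phi'}$ with fixed reduced infidelity, or via a convexity/extreme-ray argument—that singles out the single-excitation symmetrization as the state that simultaneously passes the QSV and concentrates the error on the unmeasured positions.
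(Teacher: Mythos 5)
Your proposal follows essentially the same route as the paper's proof in Appendix~E: the same symmetrized single-excitation ansatz $\ket{\phi'}=\frac{1}{\sqrt{k}}\sum_i\mathcal{P}_i\{\ket{\psi}^{\otimes(k-1)}\otimes\ket{\psi^\perp}\}$ with $\varepsilon=k\epsilon$, the same observation that $D_k$ acts trivially on the symmetric adversary, and the same computations yielding $p=1-(1-\lambda)t\epsilon$, $M$, $N$, and $\epsilon'=\epsilon+(1-\lambda)t\epsilon^2$. The extremality of this ansatz, which you rightly flag as needing a constrained-optimization argument, is likewise asserted rather than proved in the paper, so your proposal is if anything more candid about that gap.
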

\begin{proof}
Decompose the state $\ket{\phi'}$ under the basis $\{\ket{\psi},\ket{\psi_1^\perp},\ket{\psi_2^\perp},\cdots\}^{\otimes k}$ and by considering Eq.~\eqref{eq:p_iid}, we notice that the noisy pure state that can pass the collective scheme with the maximal passing probability should be all permutations of $\ket{\psi}^{\otimes(k-1)}\otimes\ket{\psi^\perp}$, where \ket{\psi^\perp} is the eigenstate corresponding to the eigenvalue $\lambda$, i.e., 
\begin{eqnarray}
    \ket{\phi'} = \frac{1}{\sqrt{k}}\sum_i \mathcal{P}_i\left\{\ket{\psi}^{\otimes k-1}\otimes \ket{\psi^\perp}\right\}\!,
\end{eqnarray}
where $\epsilon=\varepsilon/k$. 
Due to the permutation invariance of the worst case, we have 
\begin{eqnarray}
    D_k(\ket{\phi}\bra{\phi}) 
    = \frac{1}{4}\ket{\phi}\bra{\phi}
    + \frac{1}{4}S_k\ket{\phi}\bra{\phi}S_k^\dagger
    + \frac{1}{4}S_k \ket{\phi}\bra{\phi}
    + \frac{1}{4} \ket{\phi}\bra{\phi} S_k^\dagger
    = \ket{\phi}\bra{\phi}\,.
\end{eqnarray}
Hence, with the following derivations 
    \begin{eqnarray}
        \tr\left[\bigl(\Omega^{\otimes t}\otimes\openone\bigr) \ketbra{\psi}^{\otimes k}\right] &=& 1\,,\\
        \tr\left[\bigl(\Omega^{\otimes t}\otimes\openone\bigr) \ketbra{\phi'}\right] &=& \frac{k-t+t\lambda}{k} \,,\\
        \tr\left[\bigl(\Omega^{\otimes t}\otimes\openone\bigr) \ket{\psi}^{\otimes k}\bra{\phi'}\right] &=& 0\,,
    \end{eqnarray}
the passing probability after the global unitary control becomes
    \begin{eqnarray}\label{eq:p_GU}
        p_{\rm GU} = \frac{1}{C_k^t}\sum_i\tr\Bigl[\cP_i\left\{\bigl(\Omega^{\otimes t}\otimes\openone\bigr)\right\}D_k\bigl(\ket{\phi}\bra{\phi}\bigr) \Bigr] = 1 - \varepsilon + \varepsilon\frac{k-t+t\lambda}{k} = 1 - \epsilon(1-\lambda)t\,.
    \end{eqnarray}
Thus we need 
\begin{eqnarray}
    M \approx \frac{1}{(1-\lambda)t}\epsilon^{-1}
    \ln\delta^{-1}-\frac{1}{2}\ln\delta^{-1}
\end{eqnarray}
rounds of testing to verify the target state within infidelity $\epsilon$ and confidence level $1-\delta$.
The fidelity of the unmeasured states is 
    \begin{eqnarray}
        F = \tr\left[\ketbra{\psi}\otimes\openone\cdot\Pi_{(k,t)}\bigl(\eta_{\rm GU}\bigr)\right]/p_{\rm GU}
        = \left\{1-k\epsilon+\epsilon(k-t-1+t\lambda)\right\}/p_{\rm GU}
        = 1 - \frac{\epsilon}{1-\epsilon(1-\lambda)t}\,,
    \end{eqnarray}
    so the infidelity is increased to
    \begin{eqnarray}
        \epsilon' = \epsilon\left[1 + (1-\lambda)t\epsilon + (1-\lambda)^2t^2\epsilon^2 + O(\epsilon^3)\right] \approx \epsilon + (1-\lambda)t\epsilon^2\,.
    \end{eqnarray}
\end{proof}

One immediately notices that, different from the cases of independent or global white noise, the fidelity of the unmeasured states has decayed, exactly the reason why the adversary that is able to control the entire quantum system can be tricky, similar to the discussion in Ref.~\cite{Zhu.Hayashi2019c}. 
Therefore, in order to guarantee enough quality of the unmeasured states for any sequential tasks, we need to consume more resource for verification. 
The following corollary, directly derived from Theorem~\ref{thm:GUR}, is thus more useful in practice.
\begin{corollary}
    The collective strategy $\Pi_{k,t}$ produces $(k-t)M'$ unmeasured states within the infidelity $\epsilon$ and confidence level $1-\delta$ under global unitary control via
    \begin{eqnarray}
        M' 
        = \frac{\ln\delta^{-1}}{\ln\left[1 - \frac{\epsilon}{1+\epsilon(1-\lambda)t} (1-\lambda)t\right]^{-1} } 
        \approx  \frac{1}{(1-\lambda)t}\epsilon^{-1}\ln\delta^{-1} + \frac{1}{2}\ln\delta^{-1}
    \end{eqnarray}
    rounds of testing, and $N'=tM'$ number of sample consumption.
\end{corollary}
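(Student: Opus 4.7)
The plan is to derive this corollary by inverting the input-output infidelity relation established in Theorem~\ref{thm:GUR}. The crucial conceptual step is that the corollary promises infidelity $\epsilon$ on the \emph{output} (unmeasured) states, whereas Theorem~\ref{thm:GUR} is parametrized by the input infidelity. I would first ask: what input infidelity $\tilde\epsilon$ makes the relation $\epsilon' = \tilde\epsilon/[1-\tilde\epsilon(1-\lambda)t]$ from the proof of Theorem~\ref{thm:GUR} evaluate to the target value $\epsilon$? Solving this linear equation yields $\tilde\epsilon = \epsilon/[1+\epsilon(1-\lambda)t]$, which is strictly smaller than $\epsilon$ and captures the fidelity degradation caused by the global-unitary adversary on the retained samples.

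Next, I substitute this $\tilde\epsilon$ into the worst-case passing probability $p_{\rm GU} = 1 - \tilde\epsilon(1-\lambda)t$ from Eq.~\eqref{eq:p_GU}, which gives $p = 1 - \epsilon(1-\lambda)t/[1+\epsilon(1-\lambda)t]$. Invoking the standard QSV sampling bound, performing $M' = \ln\delta^{-1}/\ln p^{-1}$ rounds suffices to reject, with confidence at least $1-\delta$, any ensemble whose effective input infidelity exceeds $\tilde\epsilon$; equivalently, any ensemble that successfully passes the test produces unmeasured samples of infidelity at most $\epsilon$. This reproduces the exact expression stated in the corollary, and $N' = tM'$ follows immediately from the definition of $\Pi_{k,t}$.

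The stated asymptotic form then follows from a careful Taylor expansion. Letting $x = \epsilon(1-\lambda)t/[1+\epsilon(1-\lambda)t]$, one has $-\ln(1-x) = x + x^2/2 + O(x^3)$, so $1/\ln p^{-1} \approx 1/x - 1/2$. Combining this with the identity $1/x = 1/[\epsilon(1-\lambda)t] + 1$ yields the claimed $\frac{1}{(1-\lambda)t}\epsilon^{-1}\ln\delta^{-1} + \frac{1}{2}\ln\delta^{-1}$. The additive $+1/2$ arises from the interplay between the shift $+1$ (reflecting $\tilde\epsilon<\epsilon$) and the $-1/2$ correction from the second-order term in the logarithm expansion, which partially cancel.

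The main obstacle, such as there is one, is conceptual rather than computational: one must resist the temptation to substitute $\epsilon$ directly into Theorem~\ref{thm:GUR}'s expression for $M$, and instead carefully track that the verifier controls an acceptance threshold on the input while the user cares about the post-verification fidelity of the surviving unmeasured samples. Once the inversion $\tilde\epsilon \leftrightarrow \epsilon$ is correctly set up, all remaining steps are routine algebra, and keeping second-order terms in the logarithm expansion is the only delicate point needed to reproduce the $\frac{1}{2}\ln\delta^{-1}$ correction rather than an incorrect constant.
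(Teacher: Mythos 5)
Your proposal is correct and follows exactly the route the paper intends: invert the input--output infidelity relation $\epsilon'=\tilde\epsilon/[1-\tilde\epsilon(1-\lambda)t]$ from Theorem~3 to get $\tilde\epsilon=\epsilon/[1+\epsilon(1-\lambda)t]$, substitute into $p_{\rm GU}=1-\tilde\epsilon(1-\lambda)t$, and expand $1/\ln p^{-1}$ to second order so that the $+1$ from $1/x=1/[\epsilon(1-\lambda)t]+1$ and the $-1/2$ from the logarithm combine into the $+\tfrac12\ln\delta^{-1}$ term. This reproduces both the exact expression and the asymptotic form stated in the corollary, matching the paper's derivation.
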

Notably, the additional cost for guaranteeing the unmeasured states is only dominated by the confidence level
\begin{eqnarray}
    M' - M = \ln\delta^{-1} \left[1 + \frac{1}{12}(1-\lambda)^2t^2\epsilon^2+O(\epsilon^4)\right]\!,
\end{eqnarray}
which is relatively small as $(M'-M)/M = (1-\lambda)t\epsilon + O(\epsilon^2)$ for the typical infidelity of $\epsilon\leq5\%$.
For high-accuracy scenarios where ${\epsilon\to 0}$, and ${\epsilon'\to 0}$ as well, the additional resource cost can be safely ignored. 
For a comparison, Ref.~\cite{Zhu.Hayashi2019c} considers a more powerful (but unrealistic) adversary who can control the whole system, not just the whole $k$-sized ensemble, the sample complexity is at least
\begin{eqnarray}
    N_{\rm adv} 
    \geq \frac{\ln\delta}{\ln\lambda} + \frac{\ln\delta}{\ln\lambda}\frac{1-\epsilon}{\lambda\epsilon}
    \approx (\frac{1}{1-\lambda}+\E - 1)\epsilon^{-1}\ln\delta^{-1}\,,
\end{eqnarray}
where the additional cost dominated by the infidelity is significant which cannot be ignored.
Though this is not a completely fair comparison, the collective scheme achieves an improvement on the sample complexity by
\begin{eqnarray}
 N_{\rm adv} - N' \approx \left(\E -2\right)\epsilon^{-1}\ln\delta^{-1} - \frac{t}{2}\ln\delta^{-1}\,.
\end{eqnarray}

\section{Details of the applications}\label{App:DetailsApp}
In the main text, we have briefly outlined the efficiency of our collective scheme, $\Pi_{k,1}$, for the verification of Bell state and Dicke state, showing that it outperforms the optimal global verification. This improvement is especially pronounced for the 100-qubit Dicke state, as illustrated in Fig.~\ref{fig:p}. 
In the following, we provide further details and additional results.

\begin{figure}[t]
  \includegraphics[width=0.75\columnwidth]{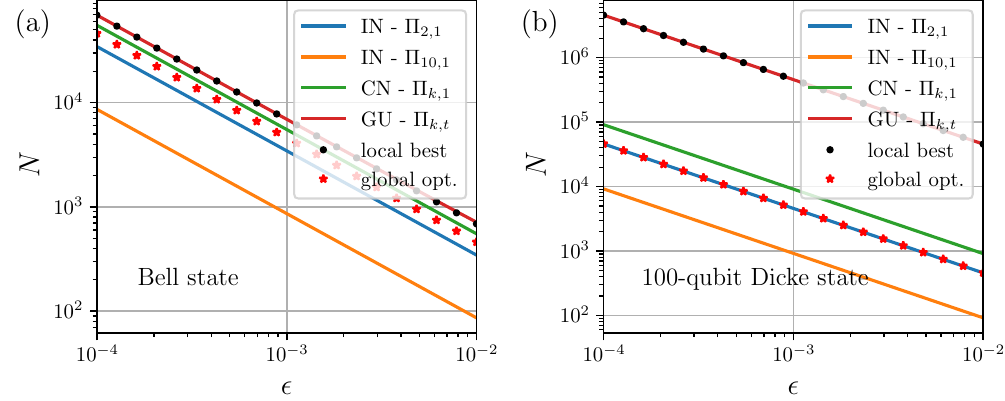}
  \caption{Sample complexity for verifying (a) a Bell state and (b) a $100$-qubit Dicke state using the collective strategies under independent white noise with schemes $\Pi_{2,1}$ (blue) and $\Pi_{10,1}$ (orange), under global white noise with the scheme $\Pi_{k,1}$ (green), and under global unitary control with the scheme $\Pi_{k,t}$ (brown) as compared to the best known local scheme \cite{Pallister.etal2018, Li.etal2020a} (black dot) and the optimal global scheme (red star). The confidence level is set to $1-\delta=99\%$.}
  \label{fig:noise}
\end{figure}
\subsection{Comparison of different noises}
First, let's consider another type of noise, which is the global unitary control as discussed in Appendix~\ref{App:GU}. 
In Fig.~\ref{fig:noise}, we compare the sample complexity for verifying a Bell state and a $100$-qubit Dicke state under independent white noise, global white noise, and global unitary control. 
Beyond the cases discussed in the main text,  our results here show that the collective scheme beats the adversary who can control the whole ensemble, guaranteeing the fidelity of an unmeasured state with the same sample complexity as the standard QSV with no adversary. 
On the other hand, the higher complexity under global unitary control enables us to differentiate powerful adversary attacks from natural white noise.

\subsection{Influence of the ensemble size and measured subset size}
\begin{figure}[t]
  \includegraphics[width=0.75\columnwidth]{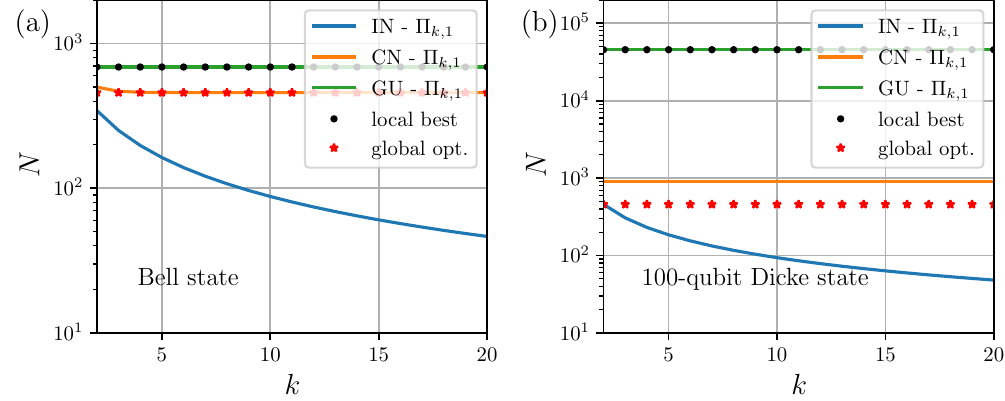}
  \caption{Sample complexity for verifying (a) a Bell state and (b) a $100$-qubit Dicke state using the collective schemes $\Pi_{k,1}$ with varying ensemble size $k$. 
  The independent white noise (blue), global white noise (orange), and global unitary control (green) are all considered, as compared to the best known local scheme \cite{Pallister.etal2018, Li.etal2020a} (black dot) and the optimal global scheme (red star). The fidelity is set to $1-\epsilon=99\%$ and the confidence level is $1-\delta=99\%$.}
  \label{fig:k}
\end{figure}
\begin{figure}[t]
  \includegraphics[width=0.75\columnwidth]{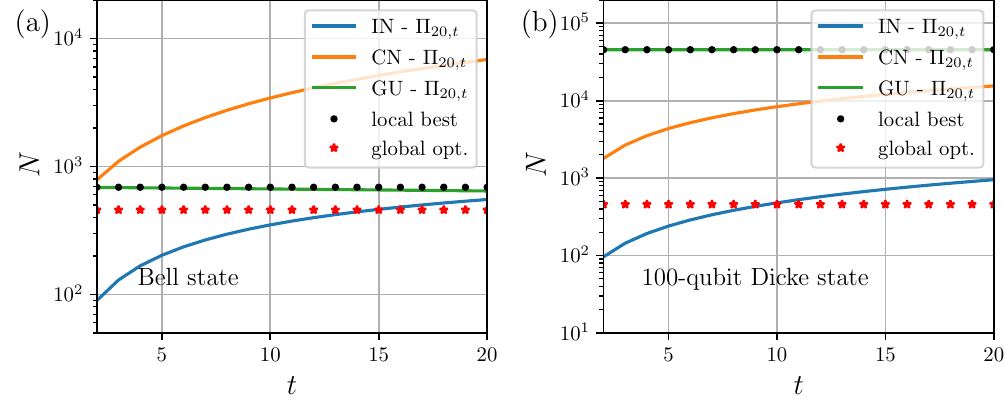}
  \caption{Sample complexity for verifying (a) a Bell state and (b) a $100$-qubit Dicke state using the collective scheme $\Pi_{20,t}$ with varying measured subset size $t$. 
  The independent white noise (blue), global white noise (orange), and global unitary control (green) are all considered, as compared to the best known local scheme \cite{Pallister.etal2018, Li.etal2020a} (black dot) and the optimal global scheme (red star). The fidelity is set to $1-\epsilon=99\%$ and the confidence level is $1-\delta=99\%$.}
  \label{fig:t}
\end{figure}

The ensemble size and measured subset have different influences on the resource consumption for verifying the entangled states suffering from different types of noise.
Based on Theorems~\ref{thm:main}-\ref{thm:GUR}, a larger ensemble can improve the efficiency infinitely in terms of both the time and sample consumption for verifying an entangled state under independent white noise. 
But for states under global white noise or global unitary control, the consumption becomes independent of the ensemble size. 
One might realize that all the resource complexities are approximated under corresponding assumptions and wonder under what practical conditions these approximations hold.
In Fig.~\ref{fig:k}, we present precise numerical calculations of the sample complexity for verifying a Bell state and a $100$-qubit Dicke state, considering various ensemble sizes $k$.
The results are based on the passing probabilities as in Eqs.~\eqref{eq:p_IN}, \eqref{eq:p_CN}, and \eqref{eq:p_GU} without approximations.
The fidelity and confidence level are both set to a typical value of $99\%$. 
Overall, the numerical results align well with the theoretical predictions.
Exception occurs for the case of verifying a Bell state under global white noise, namely the orange curve in Fig.~\ref{fig:k}(a), it does exhibit a slight dependence of sample complexity on the ensemble size $k$.
The dependence can be ignored for larger ensembles (e.g. ${k\geq 4}$ for a Bell state) and for the case of a $100$-qubit Bell state, which has a large dimension $d=2^{100}$. 

In Fig.~\ref{fig:t}, we perform similar precise numerical calculations of the sample complexity with varying unmeasured subset size $t$. 
If the entangled state suffers from independent or global white noise, a larger measured subset can worsen the sample complexity. 
However, for the case of global unitary control (the green curve in Fig.~\ref{fig:k}(a)), the sample complexity can be slightly decreased with the increasing unmeasured subset size $t$, with a ratio of  $t/[2(1-\lambda)\epsilon]$.
This improvement can be ignored for the high-accuracy scenario as ${t\ll\epsilon^{-1}}$ or the complex multipartite entangled state with ${t\ll(1-\lambda)^{-1}}$, which is the case in Fig.~\ref{fig:t}(b).

\subsection{Enhancement of unmeasured states}
\begin{figure}[t]
  \includegraphics[width=0.95\columnwidth]{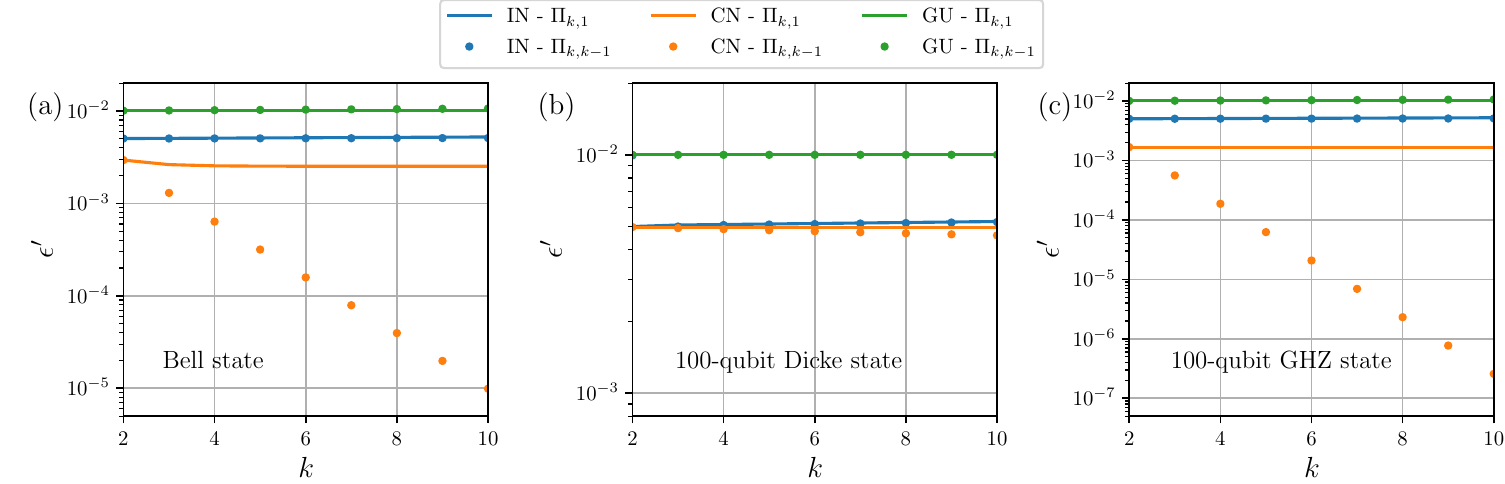}
  \caption{Decreased infidelity of the unmeasured states when verifying (a) a Bell state, (b) a $100$-qubit Dicke state, and (c) a $100$-qubit GHZ state using the collective scheme $\Pi_{k,1}$ under independent white noise (blue), global white noise (orange), and global unitary control (green). The dotted line is the case for more measurements as $\Pi_{k,k-1}$. The original infidelity is set to $\epsilon=0.01$.}
  \label{fig:eps}
\end{figure}

One notable advantage of our collective QSV scheme is the enhancement of unmeasured states during the verification procedure. 
Taking the Bell state, a $100$-qubit Dicke state, and a $100$-qubit GHZ state as examples, we illustrate the explicit numerical results in Fig.~\ref{fig:eps} for one of the unmeasured copies with the original infidelity ${\epsilon=1\%}$. 
Matching the analysis of our theorems, for independent white noise, our collective scheme can mitigate the noise to be half, and for global white noise, the scheme performs even better. 
Even if a powerful adversary can control the entire ensemble, the unmeasured states can still be verified, only with negligible infidelity increasing. 
Fully utilizing the ensemble with collective scheme $\Pi_{k,k-1}$ can make the enhancement of unmeasured states better. 
For entangled states with high symmetry (thus $\lambda$ being independent of the system size), e.g., Bell state and GHZ state, the global white noise can be exponentially decreased with the increased size of the ensemble, and thus can be ignored after passing the verification. 

According to Eq.~\eqref{eq:Dk}, the unmeasured copies of states can be separable as a tensor product only for two cases, namely that either all the copies are the target state or i.i.d. pure noisy states.
In general, however, it is important to note that each round of testing introduces certain correlation among the unmeasured ${k-t}$ copies of the states.
In most cases, subsequent tasks rely on the independent use of these verified states.
However, caution is required in two types of scenarios where correlation in the unmeasured copies might have an influence: 
(1) if the subsequent task collectively utilizes multiple copies of the states;
(2) if time correlation of the outcomes, rather than their statistical distribution, is of interest.
Under these circumstances, it is prudent to discard all but one unmeasured copy per round to ensure its independent use.

\subsection{The tradeoff}
Let's consider a task for verifying a target state within infidelity ${\epsilon=1\%}$ and confidence level ${1-\delta=99\%}$  under the independent white noise. 
For the worst consideration, of which ${1-\lambda \to 0}$ for an arbitrary target state, the collective scheme $\Pi_{k,t}$ needs $\lceil \frac{2}{k}\epsilon^{-1}\ln\delta^{-1}\rceil$ rounds of tests and consumes $t\lceil \frac{2}{k}\epsilon^{-1}\ln\delta^{-1}\rceil$ additional samples. 
Therefore, the larger ensembles are utilized, the faster the verification is; the less samples are measured in each round, the less the sample consumption is. 
For example, up to $461$ rounds of the tests are needed to verify an arbitrary target state with $\Pi_{2,1}$, and the same amount of additional samples are consumed, which is the worst case. With the scheme $\Pi_{10,1}$, it only needs $93$ rounds of the tests.
However, when considering specific target states, the sample and time consumption vary according to $\lambda$, leading to a tradeoff between the time consumption and the sample consumption of the verification procedure.  
For Bell states, the waiting time is $87$ rounds of the test with the scheme $\Pi_{10,1}$, consuming $87$ samples. The more samples are measured, the faster the verification is. 
The scheme $\Pi_{10,9}$ needs to wait for $58$ rounds of the tests, while consuming $58\times9=522$ samples, which is more than the sample consumption with the scheme $\Pi_{10,1}$. 

As we partially discussed in the main text, let's consider a task demanding $1024$ copies of a target state, verified within infidelity ${\epsilon=1\%}$ and confidence level ${1-\delta=99\%}$ under the independent white noise.
As the requirement of $1024$ copies (a typical data size in quantum computation) is larger than the worst-case which requires $461$ rounds of the tests, the task can always be achieved with a better verification. 
Using larger ensembles can speed up the verification procedure, while the verified infidelity threshold increases. 
For example, in order to provide $1024$ copies of a verified target state, the simplest setup $\Pi_{2,1}$ needs to consume another $1024$ copies and take $1024$ rounds of tests, verifying these target states within infidelity $0.45\%$ under confidence level ${1-\delta=99\%}$. 
With larger ensembles, the scheme $\Pi_{9,1}$ consumes only $128$ additional samples
and needs $128$ rounds of tests, verifying these target states within infidelity $0.80\%$ under confidence level ${1-\delta=99\%}$. 

Considering specific target states, the size of the measured subset leads to a tradeoff between the consumption and the accuracy of the verification procedure due to the $\lambda$ dependence. 
For the Bell state, the scheme $\Pi_{10,1}$ consumes $114$ additional samples
and needs $114$ rounds of tests, for verifying the target state within infidelity $0.76\%$ under confidence level ${1-\delta=99\%}$. 
While the scheme $\Pi_{10,5}$ which measures more samples in each round, consumes more with $1025$ additional samples and $205$ rounds of tests, for verifying the target state better within infidelity $0.34\%$ under confidence level $1-\delta=99\%$.

\section{Advantages of our collective scheme as compared to other related works}\label{App:Comp}
In this section, we compare our collective scheme with other related works \cite{miguel-ramiro_collective_2022, chen_memoryQSV_2025} on verifying entangled states with collective strategies, emphasizing further the unique properties and practical advantages of our approach.

\subsection{Universality for verifying arbitrary entangled states}
A notable superiority of our scheme lies in its universality for verifying arbitrary entangled states, which relies on two key factors : 
(1) The SWAP projection is capable of purifying arbitrary quantum states \cite{ricci_experimental_2004}; (2) Valid verification protocols exist for all entangled states. 
In contrast, Ref.~\cite{miguel-ramiro_collective_2022} proposes a collective strategy based on entanglement distillation operations, which achieves low sample complexity but is applicable to Bell states and (generalized) GHZ states only. 
Ref.~\cite{chen_memoryQSV_2025} demonstrates an efficient two-copy collective strategy for graph states by leveraging the connection between graph codes and their parity codes. While this approach is possible to be extended to multi-copy by increasing the local dimensions, it remains specific to graph states only.

\subsection{Arbitrarily high improvement of verification}
In general, a verification task with collective strategies can be described as follows: 
A quantum device generates $k$ copies of an unknown state per round, which undergo collective operations according to a specific strategy. 
Of those, $t(\leq k)$ copies are measured while the remaining are either discarded or used later.
This procedure is repeated for $M$ rounds, consuming ${N=tM}$ copies of the states in total.
Based on certain passing rules, with confidence level ${1-\delta}$, the state is guaranteed to be close to the target state with a fidelity better than ${1-\epsilon}$. 

The collective strategy proposed in Ref.~\cite{miguel-ramiro_collective_2022}, though limited to verifying Bell states only, is claimed to have an exponential improvement of verification.
Specifically, the entangled operations, which we call the error number gates, are sequentially applied on $m$ Bell states and an ancillary maximally entangled state with dimension ${d=m+1}$.
It is the high-dimensional maximally entangled state, which can be constructed by embedding $\log_2(m+1)$ Bell states, needs to be measured and thus destroyed in the end.
The passing probability for the noisy Bell state in each round is ${p_{\rm ED, Bell}=1-m\epsilon/2+O(\epsilon^2)}$.
Therefore, in each round, the so-called exponentially improved scheme of Ref.~\cite{miguel-ramiro_collective_2022} needs ${k_{\rm ED, Bell}=2^t-1+t}$ copies of Bell states of which $t$ copies are measured.
Hence, the Bell state can be verified within infidelity $\epsilon$ and confidence level $1-\delta$ via 
\begin{eqnarray}
    M_{\rm ED, Bell} = \frac{2}{2^t-1}\epsilon^{-1}\ln\delta^{-1}
\end{eqnarray}
rounds of testing, and $N_{\rm ED, Bell}=tM_{\rm ED, Bell}$ number of sample consumption.

Considering Theorem~\ref{thm:main}, if we use the same setting as that of Ref~\cite{miguel-ramiro_collective_2022} by preparing ${k=k_{\rm ED, Bell}}$ copies then measuring $t$ of them, our collective scheme can also beat the one of Ref~\cite{miguel-ramiro_collective_2022}, i.e.,
\begin{eqnarray}
    M(k=k_{\rm ED, Bell},t) = \frac{2}{2^t-1+t+(1-\lambda)t}\epsilon^{-1}\ln\delta^{-1} < M_{\rm ED, Bell}\,.
\end{eqnarray}
More importantly, in our scheme, the number of all quantum states $k$ and the number of measured ones $t$ are free to choose, without any restrictions.
Therefore, with more and more samples in each round, we can still measure just one (or a constant number) of them, reaching the arbitrarily high efficiency, which can be better than the exponential improvement of Ref.~\cite{miguel-ramiro_collective_2022}.
Moreover, if the ensemble is sufficiently large, the resource consumed would asymptotically decrease to zero, meaning that the verification can be completed within just one round using a single copy of the state. 

\subsection{Feasibility for experimental realization}
\begin{figure}[t]
  \includegraphics[width=0.55\columnwidth]{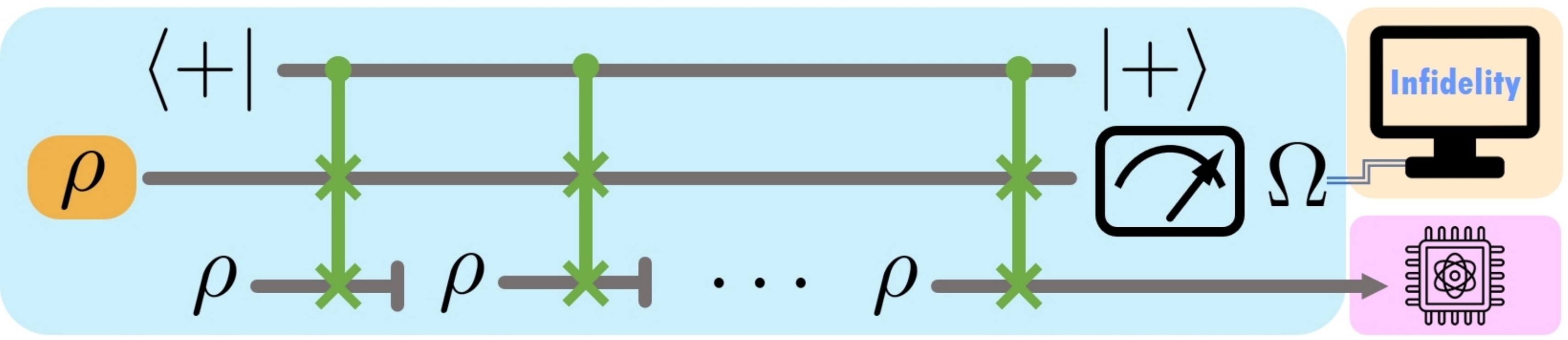}
  \caption{Sketch of the most memory-saving collective verification scheme. It only requires the quantum memory to store one copy of the quantum state (orange shadow) and one ancillary qubit, which are measured while all the other copies in the ensemble are discarded after the collective operations. The last unmeasured copy can be used for subsequent tasks.}
  \label{fig9}
\end{figure}

In practice, the quantum states produced by quantum devices are considered inexpensive, and also implementing collective operations, though challenging, has become feasible at intermediate scales. 
The most expensive and limited resource is the quantum memory or quantum register, which is required to store the quantum states for collective operations prior to measurements. 

The freedom to choose any $t$ copies of the states to measure from the $k$-ensemble offers another significant advantage of our scheme, namely the efficient use of quantum memory. 
Even with quantum memory sufficient for only one copy of the quantum state, our scheme achieves arbitrarily high efficiency, surpassing the optimal global verification. 
The sketch of such most memory-saving collective scheme is illustrated in Fig.~\ref{fig9}. 
It is implemented by sequentially interacting one registered quantum state with others, of which the registered one is measured and the others are discarded. This variation actually realizes another SWAP projection with the inverse of the cyclic permutation $S_k$.
Due to symmetry with the SWAP projection $D_k$, it still follows Theorem~\ref{thm:main}, and
the time and sample complexity of this collective verification is given by
\begin{eqnarray}
    N = M = \ln\delta^{-1}/\ln p^{-1} \approx \frac{2}{(1-\lambda)+k}\epsilon^{-1}\ln\delta^{-1}\,,
\end{eqnarray}
which outperforms the optimal global verification when $k\geq2$.
In each round, without additional quantum memory, $k-2$ copies of the unmeasured states are discarded, 
while the final unmeasured one state with a smaller infidelity of $\epsilon/2$ is produced.
Considering the worst case that a powerful adversary is able to control the whole ensemble, according to Theorem~\ref{thm:GUR}, the time and sample complexity is
\begin{eqnarray}
    N' = M' \approx \frac{1}{1-\lambda}\epsilon^{-1}\ln\delta^{-1}\,.
\end{eqnarray}

In contrast, the collective schemes in Ref.~\cite{miguel-ramiro_collective_2022} and Ref.~\cite{chen_memoryQSV_2025} demand significantly larger quantum memory for multi-copy strategies. 
Consider the verification task for Bell states, of which both of them can realize. 
In each round, for $k$ copies of input states, the scheme of Ref.~\cite{miguel-ramiro_collective_2022} requires at least $O(\log(k))$ quantum registers to construct a high-dimensional entangled ancilla, which is subsequently measured. 
The scheme of Ref.~\cite{chen_memoryQSV_2025} treats all $k$-copy Bell states as a high-dimensional GHZ state, necessitating memory for all copies. 
While both schemes deliver high performance when extended, their reliance on large quantum memory sets a higher experimental threshold. 
A detailed comparison of the resource requirements for our memory-saving collective scheme and those of Refs.~\cite{Pallister.etal2018, miguel-ramiro_collective_2022, chen_memoryQSV_2025} is provided in Table~\ref{tab:resource}.
\begin{table}[h]
    \centering
    \caption{Resource needed for verifying Bell states by different schemes.}
    \begin{tabular}{l|c|c|c|c}
    \hline
                                      &Time ($\epsilon^{-1}\ln\delta^{-1}$)   &Sample ($\epsilon^{-1}\ln\delta^{-1}$)     &Operations     &Memory (qubit) \\
    \hline
    Our scheme with i.i.d. white noise      &$\frac{6}{3k+2}$ \footnote{This complexity is computed based on the optimal local QSV \cite{Pallister.etal2018}, and one can also consider the sequential verification \cite{Liu.etal2020b} with the complexity being $\frac{2}{k}$.}
                                                                    &$\frac{6}{3k+2}$       &$O(k)$ Fredkin gates                                                     &$2+1$
                                                                    \\ 
    \hline
    Our scheme under worst case             &$\frac{3}{2}$          &$\frac{3}{2}$          &$O(k)$ Fredkin gates                                                     &$2+1$\\ 
    \hline
    Ref.~\cite{miguel-ramiro_collective_2022} 
    with i.i.d. white noise                 &$\frac{2}{2^t-1}$ \footnote{For the scheme in Ref.~\cite{miguel-ramiro_collective_2022}, in each round, we need $k\!=\!(2^t\!-\!1)\!+\!t$ copies of the states in total, depending on the size of the quantum memory.}     
                                                                    &$\frac{2t}{2^t-1}$      &$O(2^t)$ controlled-$X_{2^t\!-\!1}$ gates                           &$2t$\\ 
    \hline
    Ref.~\cite{chen_memoryQSV_2025} 
    under worst case                         &$\frac{2^k}{2^k-1}$   &$\frac{k2^k}{2^k-1}$   &$O(2^{k})$ $Z_{2^k}$ gates and $X_{2^k}$ gates \cite{Li.etal2020b}    &$2k$\\
    \hline
    Ref.~\cite{Pallister.etal2018} without collective scheme         &$\frac{3}{2}$ \footnote{The complexity of the scheme in Ref.~\cite{Pallister.etal2018} is based on the consideration of all possible local noise. For the worst case here, Ref.~\cite{Zhu.Hayashi2019d} gives the scaling as Euler's number $\mathrm e \approx 2.718$.}      &$\frac{3}{2}$      &$2$ $X, Y, Z$ projections ($6$ in total)   &$0$\\
    \hline
    \end{tabular}
    \label{tab:resource}
\end{table}
\subsection{Distributed construction of our collective scheme}
\begin{figure}[t]
  \includegraphics[width=0.55\columnwidth]{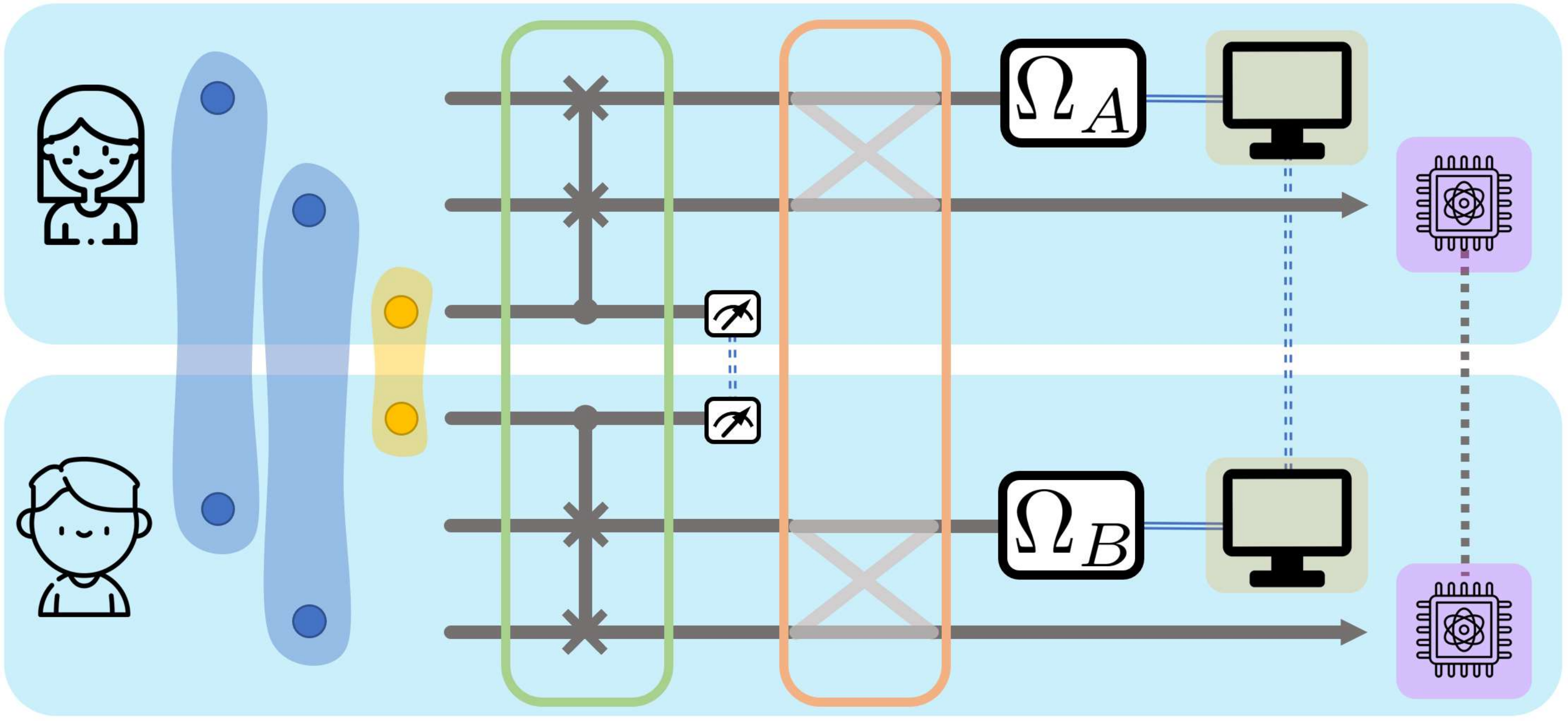}
  \caption{Sketch of the distributed construction of the collective verification scheme with an entangled ancilla for two-sized ensembles. The SWAP projections are local and controlled by an ancillary Bell state. The measurement on the ancilla is the Bell projection, which can be realized locally assisted with classical communications.}
  \label{fig:local}
\end{figure}
One notable feature of our collective scheme is the use of a single ancillary qubit. While this approach conserves resources, it poses challenges for constructing distributed verification protocols for multipartite entangled states. However, this issue can be tackled by employing a maximally entangled state of qubits instead of a single ancillary qubit. 
As the sketch shown in Fig.~\ref{fig:local}, for the bipartite case, we have the equivalence 
\begin{eqnarray}
    D_k(\rho) 
    &=& \tr_{{\rm anA},{\rm anB}}\Bigl[\ket{\Phi}\bra{\Phi}_{{\rm anA},{\rm anB}}\otimes\openone_{A1,A2,B1,B2} \nonumber\\
    &&\qquad \cdot cS_{{\rm anA},A1,A2}\otimes cS_{{\rm anB},B1,B2} \bigl(\ket{\Phi}\bra{\Phi}_{{\rm anA},{\rm anB}}\otimes \rho_{A1,B1}\otimes\rho_{A2,B2} \bigr) cS_{{\rm anA},A1,A2}^\dag\otimes cS_{{\rm anB},B1,B2}^\dag\Bigr]\nonumber\\
    &=& \tr_{{\rm anA},{\rm anB}}\Bigl[(\ket{+}\bra{+}_{\rm anA}\otimes\ket{+}\bra{+}_{\rm anB}+\ket{-}\bra{-}_{\rm anA}\otimes\ket{-}\bra{-}_{\rm anB})\otimes\openone_{A1,A2,B1,B2} \nonumber\\
    &&\qquad \cdot cS_{{\rm anA},A1,A2}\otimes cS_{{\rm anB},B1,B2} \bigl(\ket{\Phi}\bra{\Phi}_{{\rm anA},{\rm anB}}\otimes \rho_{A1,B1}\otimes\rho_{A2,B2} \bigr) cS_{{\rm anA},A1,A2}^\dag\otimes cS_{{\rm anB},B1,B2}^\dag\Bigr] \nonumber\\
    &=& \frac{\openone+S_A\otimes S_B}{2}\rho \left(\frac{\openone+S_A \otimes S_B}{2}\right)^\dag\!,
\end{eqnarray}
where the ancilla ${\ket{\Phi}=(\ket{00}+\ket{11})/\sqrt{2}}$ is the Bell state.
Note that the controlled SWAP can be operated locally, and the measurements on the ancillary Bell state can also be local with the help of classical communications. 
This technique is similar to that of Refs.~\cite{miguel-ramiro_collective_2022, chen_memoryQSV_2025}, but no higher-dimensional local systems, or equivalently multiple ancillary Bell states, are needed.

In practice, the initialization of an ancillary qubit is convenient, but rather difficult for an entangled ancillary pair.
Luckily, in our distributed construction, the Bell state is not necessarily to be perfect. 
For the bipartite distributed verification task, we consider a noisy ancillary Bell state as 
\begin{eqnarray}
    \rho_{\rm ancilla} = \bigl(1-\gamma\bigr)\ket{\Phi}\bra{\Phi} + \gamma\frac{\openone}{4}\,.
\end{eqnarray}
Due to linearity, with the noisy ancilla, we can realize a noisy SWAP projection as 
\begin{eqnarray}
    D'_k(\rho) = \left(1-\frac{\gamma}{2}\right)D_k(\rho)\,.
\end{eqnarray}

For the case of distributed verification of Bell states, with the ancillary Bell state being one copy of the (possibly) noisy quantum state (${\gamma=q=d\epsilon/(d-1)}$), 
the passing probability of measuring $t$ samples in the $k$-sample ensemble by the collective QSV is
\begin{eqnarray}
        p' = \frac1{C_{k-1}^t}\sum_i\tr\left[\cP_i\left\{\bigl(\Omega^{\otimes t}\otimes\openone\bigr)\right\}D'_{k-1}\bigl(\sigma^{\otimes {k-1}}\bigr)\right] = \left\{\frac{1}{2}\left[\tr(\Omega\sigma)\right]^t 
        + \frac{1}{2}\tr\left[(\Omega\sigma)^t\sigma^{k-1-t}\right]\right\}\left(1-\frac{\gamma}{2}\right).           
    \end{eqnarray}
Note that, because of the ancillary Bell state, the SWAP projection in this case is applied on $(k-1)$ copies.
Therefore, for Bell states, we have
    \begin{eqnarray}
        p'_{\rm IN} = \frac{1}{2}\left(1-\frac{2}{3}\epsilon\right) \left[(1 - \epsilon + \lambda \epsilon)^t + (1-\epsilon)^{k-1} + \frac{\epsilon^{k-1}\lambda^t}{3^{k-2}}\right]
        \approx 1 - \frac{3k+3(1-\lambda)t+1}{6}\epsilon + O(\epsilon^2)\,.
    \end{eqnarray}
One can immediately find that the passing probability of the noisy case $p'_{\rm IN}$ is actually smaller than that of the perfect case $p_{\rm IN}$, meaning that using a noisy Bell state as the ancilla makes the Bell states easier to verify. 
We only need 
\begin{eqnarray}
    M = \ln\delta^{-1}/\ln p^{-1} \approx \frac{2}{(1-\lambda)t+k+\frac{1}{3}}\epsilon^{-1}\ln\delta^{-1}
\end{eqnarray}
rounds of testing, and ${N = (t+1)M}$ number of samples. 

The better performance in sample consumption is not surprising.
This is a feature coming from the framework of quantum state verification first introduced in Ref.~\cite{Pallister.etal2018}, namely the verified infidelity $\epsilon$ accounts for all possible noise. 
Thus, we can always guarantee that the unknown states have a fidelity with the target state better than $1-\epsilon$ in the presence of noise. 
In other words, if we know the noise of gates, the verification will consume fewer samples to guarantee the fidelity of unknown states. 
Therefore, in general, the noisy gates will not make the efficiency worse in a verification task, of which our collective verification scheme retains as well.

\end{document}